\documentclass[a4paper,UKenglish,cleveref, autoref, thm-restate]{lipics-v2021}

\pdfoutput=1 
\hideLIPIcs  

\usepackage{xspace}

\DeclareMathOperator{\Comp}{Comp}
\DeclareMathOperator{\Cross}{Cross}
\DeclareMathOperator{\Change}{Change}
\DeclareMathOperator{\opt}{opt}

\newcommand{\gcal}{\mathcal{G}}
\newcommand{\ecal}{\mathcal{E}}

\newcommand{\bigO}{\mathcal{O}}
\newcommand{\Na}{\mathbb{N}}
\newcommand{\LCR}{LCR\xspace}
\newcommand{\LCRlong}{\textsc{Layered Connectivity Reconfiguration}\xspace}
\newcommand{\LCSR}{LCSR\xspace}
\newcommand{\LCSRlong}{\textsc{Layered Connectivity Shortest Reconfiguration}\xspace}
\newcommand{\STSR}{STSR\xspace}
\newcommand{\STSRlong}{\textsc{Spanning Tree Sequence Reconfiguration}\xspace}
\newcommand{\BSR}{\textsc{BSR}\xspace}
\newcommand{\BSRlong}{\textsc{Basis Sequence Reconfiguration}\xspace}
\newcommand{\VC}{\textsc{Vertex Cover}\xspace}
\newcommand{\VCC}{\textsc{Vertex Cover-3}\xspace}
\newcommand{\PTIME}{\textsf{P}}
\newcommand{\NP}{\textsf{NP}}
\newcommand{\APX}{\textsf{APX}}

\bibliographystyle{plainurl}

\title{Temporal Graph Reconfiguration for Always-Connected Graphs} 


\author{Paul Sievers}
    {Hasso Plattner Institute, University of Potsdam, Germany \and \url{}}
    {paul.sievers@student.hpi.de}
    {}
    {}

\author{George Skretas}
{Hasso Plattner Institute, University of Potsdam, Germany \and \url{}}
{georgios.skretas@hpi.de}
{https://orcid.org/0000-0003-2514-8004}
{}

\author{Georg Tennigkeit}
{Hasso Plattner Institute, University of Potsdam, Germany \and \url{}}
{georg.tennigkeit@hpi.de}
{https://orcid.org/0000-0003-0734-0684}
{HPI Research School on Foundations of AI (FAI)}

\authorrunning{P. Sievers, G. Skretas and G. Tennigkeit} 

\Copyright{Jane Open Access and Joan R. Public} 

\ccsdesc[500]{Mathematics of computing~Graph theory}
\ccsdesc[500]{Theory of computation~Graph algorithms analysis}

\keywords{temporal graphs, reconfiguration, layered networks, network redesign} 





\acknowledgements{We want to thank the anonymous reviewer who hinted at the extension to \APX-hardness in a previous review.}

\nolinenumbers 

\EventEditors{John Q. Open and Joan R. Access}
\EventNoEds{2}
\EventLongTitle{42nd Conference on Very Important Topics (CVIT 2016)}
\EventShortTitle{CVIT 2016}
\EventAcronym{CVIT}
\EventYear{2016}
\EventDate{December 24--27, 2016}
\EventLocation{Little Whinging, United Kingdom}
\EventLogo{}
\SeriesVolume{42}
\ArticleNo{23}

\begin{document}

\maketitle

\begin{abstract}

Network redesign problems ask for modifications to the edges of a given graph to satisfy certain properties. In temporal graphs, where edges are only active at certain times, we are sometimes only allowed to modify when the edges are going to be active. In practice, we might not even be able to perform all of the necessary modifications at once; changes must be applied step-by-step while the network is still in operation, meaning that the network must continue to satisfy some properties. To initiate a study in this area, we introduce the class of temporal graph reconfiguration problems.

As a starting point, we consider the \LCRlong (\LCR) problem: Given two always-connected temporal graphs $\gcal_1$ and $\gcal_2$, determine if it is possible to transform $\gcal_1$ into $\gcal_2$ by changing the time at which a single temporal edge is active in each step, such that every intermediate temporal graph is always-connected. We provide a dynamic programming algorithm for the \LCR problem. We also show that finding the shortest reconfiguration sequence between two temporal graphs is \APX-hard. Additionally, we show that the \LCR problem is equivalent to the \STSRlong (\STSR) problem introduced in \cite{DBLP:conf/isaac/HanakaIKOS24}. Therefore, our results also answer the two open questions presented by the authors: (i) find a simpler algorithm for the \STSR problem, (ii) show that the \STSR problem is inapproximable up to some factor.

\end{abstract}

\newpage

\section{Introduction}

\emph{Network redesign problem} is an umbrella term for a fundamental class of problems in network science. Intuitively, given a graph $G$, change the edges of $G$ such that the new graph $G'$ satisfies some properties. Some examples of such properties are low diameter, low stretch, or strong connectivity. Several variants of this problem have been studied in programmable matter \cite{DBLP:conf/sirocco/KostitsynaLS24,DBLP:journals/jcss/MichailSS19}, reconfigurable networks \cite{DBLP:journals/algorithmica/AkitayaADDDFKPP21,DBLP:conf/icalp/GmyrHSS17,DBLP:journals/jcss/MichailSS19,DBLP:journals/dc/MichailSS22} and temporal graphs \cite{DBLP:conf/soda/BaralKMRWZ26,DBLP:conf/ijcai/DeligkasES23,DBLP:journals/iandc/DeligkasP22,DBLP:journals/jcss/EnrightMMZ21}. 

Of particular interest to us are network redesign problems in temporal graphs. A temporal graph is a graph augmented by adding labels to the edges. These labels indicate when the edge is available. The general framework of network redesign problems in temporal graphs is that we are given an input temporal graph $\gcal$, and we want to find the minimum number of label changes that turn it into a temporal graph $\gcal'$ that satisfies a given property. For example, in \cite{DBLP:journals/jcss/EnrightMMZ21}, the authors find the minimum number of labels necessary to delete such that the reachability of the vertices of the temporal graph is upper bounded. In \cite{DBLP:journals/iandc/DeligkasP22}, the objective remains the same, but we only allow delaying labels. Finally, in \cite{DBLP:conf/ijcai/DeligkasES23}, the authors maximize the reachability of a subset of the graph's vertices by advancing or delaying the temporal edges. All of these papers assume that in the application domain, we can change any number of connections of the network to arrive at the network $\gcal'$ instantly. 

However, in many applications, such as transportation or computer networks, there exist scenarios where network connections can only be changed gradually in small batches, or even one at a time. This constraint is relevant when networks must change connections while remaining operable. For example, when train tracks are undergoing maintenance, trains are rerouted to keep the network operational. Therefore, we would want to find an order of changing these connections such that the network remains operable after each change. Such constraints are often considered in distributed problems in dynamic graphs \cite{DBLP:journals/dc/GotteHSW23,DBLP:conf/sirocco/KostitsynaLS24}, as well as reconfiguration problems in graphs \cite{DBLP:conf/esa/BousquetI0MOSW20,DBLP:conf/stacs/BousquetI0MOSW22}.

Motivated by this, we introduce a class of problems called emph{temporal graph reconfiguration problems}. In this class of problems, we are given a starting temporal graph $\gcal_1$ and a target temporal graph $\gcal_2$ with the same vertex set $V$, along with a property $P$. The objective is to find a sequence of changes to the edges of the graph such that $\gcal_1$ transforms into $\gcal_2$ while the graph maintains the property $P$ after every single change.  The goal of this paper is to initiate the study of the class of temporal graph reconfiguration problems.

\subsection{Our Contribution}

Before introducing our contribution, we first define some necessary notation and introduce a new problem in this class. We define a $\emph{temporal graph}$ with lifetime $T$ as $\gcal = (V, \ecal)$ where $V$ is the set of vertices and $\ecal \subseteq \binom{V}{2} \times [T]$ is the set of temporal edges. We will refer to a temporal edge simply as an edge when context makes it clear. A temporal graph can be viewed as a collection of $\emph{snapshot}$s, with the $t$-th snapshot $\gcal(t) = (V, \{(e,t) \in \ecal\})$, for a $t \in [T]$, representing the static graph containing only edges active at time $t$. We call a temporal graph $\gcal$ $\emph{always-connected}$ if every snapshot of $\gcal$ is connected. 

With this, we define the \LCRlong (\LCR) problem: Given two always-connected temporal graphs $\gcal_1$ and $\gcal_2$, determine if it is possible to transform $\gcal_1$ into $\gcal_2$ by changing the time at which a single temporal edge is active in each step, such that every intermediate temporal graph is always-connected. As an example, \cref{fig:example_reconfiguration} shows such a transformation between two temporal graphs.

\begin{figure}[h!]
    \centering
    \includegraphics[width=\textwidth]{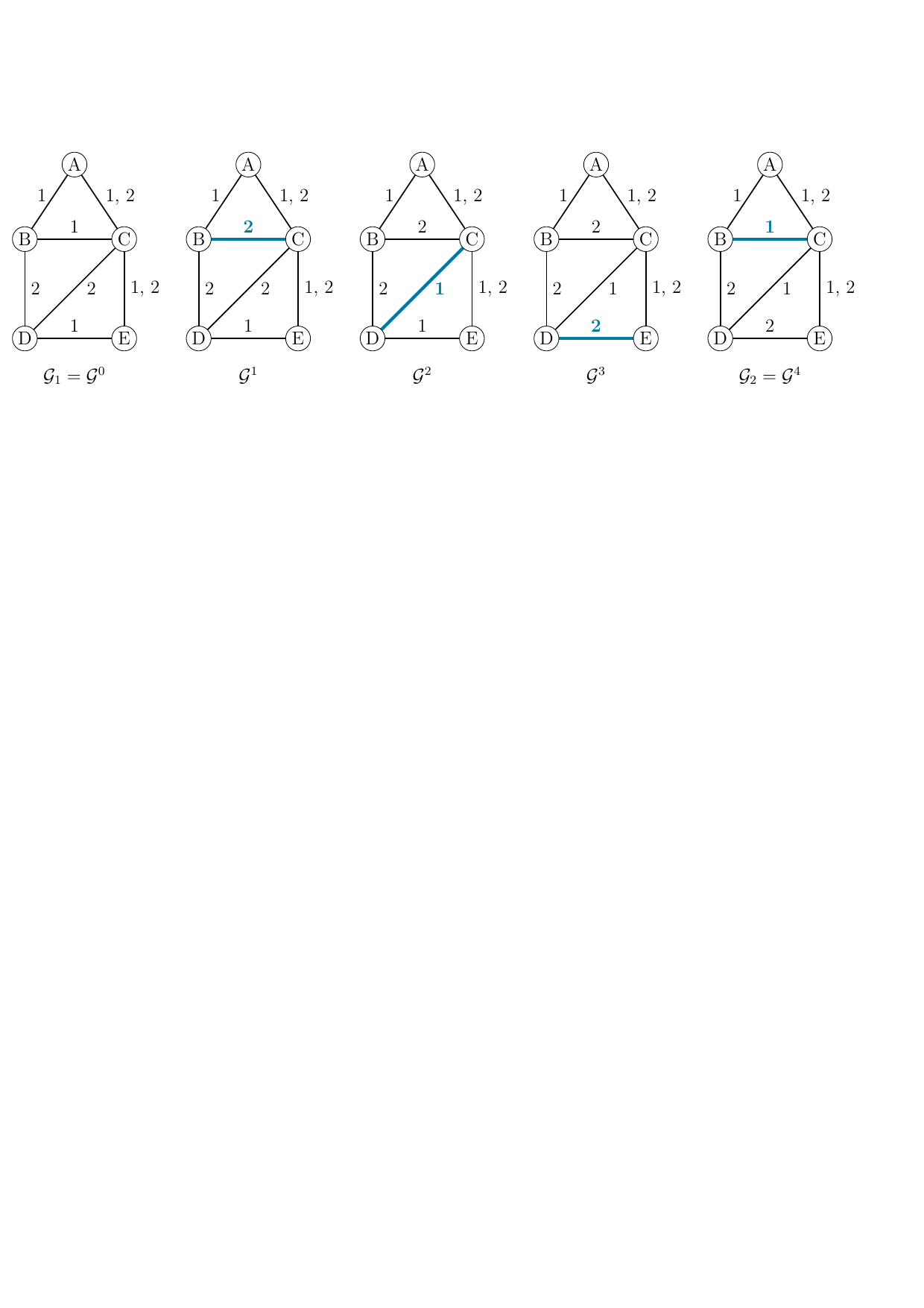}\\
    \caption{Example of a valid reconfiguration sequence $(\gcal^0, \cdots, \gcal^4)$ between $\gcal_1$ and $\gcal_2$. Time labels are shown on each edge, and edges marked in blue differ from the previous graph in the sequence.}
    \label{fig:example_reconfiguration}
\end{figure}

Our first result is a simple polynomial-time algorithm for the \LCR problem. Informally, we first define reachability partitions of bridge edges. Bridges are significant because changing their label disconnects their snapshot; however, their reachability partitions provide some crucial insights for how a bridge edge can be turned into a non-bridge edge by reconfiguring other edges. We use dynamic programming to determine, for any other edge, whether (and how) it can be turned into a non-bridge. If we identify an edge that is unchangeable no matter what reconfiguration happens before, and it has different labels in $\gcal_1$ and $\gcal_2$, reconfiguration is not possible. Otherwise, we repeatedly identify an edge that differs between $\gcal_1$ and $\gcal_2$ and apply the dynamic programming steps to change the label of that edge. The trick here is to apply every change simultaneously to $\gcal_1$ and $\gcal_2$ in order to arrive at a canonical form for both graphs. This gives a reconfiguration sequence from both $\gcal_1$ and $\gcal_2$ to the same canonical labeling; Because every reconfiguration step is reversible, reversing the sequence from $\gcal_2$ to this canonical labeling yields a reconfiguration from $\gcal_1$ to $\gcal_2$.

Our second result is an inapproximability result. We show that finding the shortest reconfiguration sequence between two temporal graphs is \APX-hard, even for temporal graphs of lifetime $2$. Our reduction is from \VC. The reduction represents vertices as cycles on a snapshot; choosing a vertex corresponds to breaking its cycle in order to close a different (more useful) cycle. To find a shortest reconfiguration sequence, one must determine the minimum number of cycles to break and reconnect.

For our final contribution, we show a relation between the \LCR problem and the \STSRlong problem introduced in \cite{DBLP:conf/isaac/HanakaIKOS24}. The \STSR problem is defined as follows: Given a multigraph $G$, a sequence of $k$ spanning trees $(T_1, \dots T_k)$ is \emph{feasible} if they are edge-disjoint. Given two such feasible sequences, the goal is to reconfigure one sequence into another: In each step, a single spanning tree $T_i$ is replaced with $T_i'=T_i-e+f$ for $e\in E(T_i)$ and $f\in E(G)\setminus E(T_i)$. We show that \LCR and \STSR are equivalent, i.e., we show (in the appendix) that a solution to the \LCR problem can be transformed into a solution to the \STSR problem, and vice versa. Because of this equivalence, our two results answer the two open questions posed by \cite{DBLP:conf/isaac/HanakaIKOS24}.

In particular, the authors provide an algorithm for the \STSR problem by first solving a more general problem, \BSRlong (\BSR), formulated in matroid theory. They pose the first open question of whether one can design a simpler algorithm for the \STSR problem. Our dynamic programming algorithm for the \LCR problem also works for the \STSR problem. It is also a simpler algorithm that relies solely on graph-theoretic arguments and the graphs' structural properties. 

Additionally, the authors in \cite{DBLP:conf/isaac/HanakaIKOS24} show that finding the shortest reconfiguration between two sequences of the \BSR problem is inapproximable up to a factor of $c\log n$. Since \BSR is a more general problem than \STSR, they pose as an open question: what is the computational complexity of the \STSR problem? We show that \STSR is \APX-hard, since the \APX-hardness result for \LCR extends to \STSR.

\subsection{Related Work}

Apart from \cite{DBLP:conf/isaac/HanakaIKOS24}, the research most closely related to ours investigates the reconfiguration of arborescences consisting of non-decreasing edge-labels in temporal graphs \cite{DBLP:conf/sand/DondiL24,DBLP:conf/wads/ItoIKKKMS23}. The key distinction between \cite{DBLP:conf/sand/DondiL24,DBLP:conf/wads/ItoIKKKMS23} and our work is that our reconfiguration changes the temporal graph itself, whereas in \cite{DBLP:conf/sand/DondiL24,DBLP:conf/wads/ItoIKKKMS23}, the authors modify a solution set on the graph.

In graph theory literature, numerous papers consider a solution set on a graph $G$ and the goal is to reconfigure it into a different solution set in the same graph $G$. In \cite{DBLP:conf/stacs/BousquetI0MOSW22}, the authors examined spanning tree reconfiguration with different constraints on degree and diameter, while in \cite{DBLP:conf/esa/BousquetI0MOSW20}, the authors considered constraints on the number of leaves. The complexity of different types of induced tree reconfiguration problems was analyzed by Wasa, Yamanaka and Arimura \cite{DBLP:journals/ieicet/WasaYA19}. In \cite{DBLP:journals/tcs/HanakaIMMNSSV20}, the authors described several reconfiguration problems in graphs under different graph structure properties, including paths, cycles, trees, and cliques, and with different operations. A polynomial-time algorithm for the matching reconfiguration problem was given in \cite{DBLP:journals/tcs/ItoDHPSUU11}. Here, the task is to transform two matchings $M$ and $M'$ into each other by adding or deleting an edge in each step, such that there is no intermediate matching of size smaller than $\min(|M|, |M'|)-1$. A generalization of this problem, allowing for more changes at each step, was analyzed by Solomon and Solomon \cite{DBLP:conf/innovations/SolomonS21}. In \cite{DBLP:conf/mfcs/BonamyBHIKMMW19}, the authors studied the perfect matching reconfiguration problem with a focus on its complexity in different graph classes and showed it to be \textsf{PSPACE}-complete.

Moving on to graph reconfiguration problems, motivated by detecting and redistricting gerrymandering, the authors in \cite{DBLP:journals/tcs/AkitayaKKST22} study the graph reconfiguration of connected graph partitions via node swapping between partitions. Deciding if there exists a reconfiguration sequence when swapping a single node at a time is \textsf{PSPACE}-complete. A plethora of papers also study graph reconfiguration problems in distributed systems, particularly in the context of programmable matter. The amoebot model considers embedded modules on a triangular grid that form a shape \cite{DBLP:journals/dc/DaymudeRS23}. The main problem investigated in the amoebot model is to develop algorithms that transform a given starting shape $A$ into another given starting shape $B$ while maintaining connectivity \cite{DBLP:conf/sirocco/KostitsynaLS24}. Other programmable matter models have also studied this question \cite{DBLP:journals/jcss/MichailSS19,DBLP:journals/jpdc/NavarraPP25}. Finally, graph reconfiguration problems have also been studied in the overlay networks literature \cite{DBLP:journals/dc/GotteHSW23,DBLP:journals/dc/MichailSS22}.

\section{Preliminaries}\label{sec:preliminaries}

For $n \in \Na$ we denote $[n] = \{1, \cdots, n\}$. For a set $X$, we write $\binom{X}{2}$ for the set of unordered pairs of elements in $X$. A temporal graph with lifetime $T$ is defined as $\gcal = (V, \ecal)$ where $V$ is the set of vertices and $\ecal \subseteq \binom{V}{2} \times [T]$ is the set of temporal edges. The $t$-th snapshot of $\gcal$ is defined as $\gcal(t) = (V, \{(e,t) \in \ecal\})$, for a $t \in [T]$, representing the static graph containing only edges active at time $t$. We define $\ecal(t) = \{e \mid (e,t) \in \ecal \}$ for $t \in [T]$. We call a temporal graph $\gcal$ always-connected if every snapshot of $\gcal$ is connected. For a temporal graph $\gcal = (V, \ecal)$ and an edge $(e,t) \in \ecal$ we denote by $\gcal - (e,t)$ the temporal graph $\gcal'=(V, \ecal \setminus \{(e,t)\})$. We write $n = |V|$ for the number of vertices and $M = |\ecal|$ for the number of temporal edges.

We say a snapshot $\gcal(t)$ is \emph{connected} if there exists a path in $\gcal(t)$ between every pair of nodes in $V$. A \emph{connected component} of $\gcal(t)$ is a subset of vertices $C \subseteq V$ such that for all pairs of nodes $a, b \in C$ there exists a path between $a$ and $b$ in $\gcal(t)$ and for all nodes $a \in C$ and $b \not \in C$ there does not exist a path between $a$ and $b$ in $\gcal(t)$. A temporal edge $(e,t) \in \ecal$ is called a \emph{bridge} if its removal increases the number of connected components of $\gcal(t)$ by one. In static graphs (and thus also in snapshots of a temporal graph), all bridges can be computed in $\bigO(|V| + |E|)$ time using the algorithm described in \cite{tarjan_note_1974}.

An \emph{edge relabeling} operation on a temporal graph takes an edge $(e, t) \in \ecal$ and changes the time it is active, replacing it with $(e, t')$ for some $t' \in [T] \setminus \{t\}$. We refer to \emph{relabeling} an edge $(e,t) \to (e, t')$ as a shorthand for this operation. A \emph{reconfiguration sequence} between temporal graphs $\gcal_1 = (V, \ecal_1)$ and $\gcal_2 = (V, \ecal_2)$ is a sequence of temporal graphs $(\gcal_1 = \gcal^0, \gcal^1, \cdots, \gcal^k = \gcal_2)$ with $\gcal^j = (V, \ecal^j)$ for $0 \le j \le k$, such that for $0 \le i < k$ the graph $\gcal^{i+1}$ can be obtained from $\gcal^i$ using a single edge relabeling operation. More formally, for all $0 \le i < k$, there exists $t, t' \in [T]$ such that for all $\tilde t \in [T] \setminus \{t, t'\}: \ecal^{i}(\tilde t) = \ecal^{i+1}(\tilde t)$ and it holds that $\ecal^i(t) \setminus \ecal^{i+1}(t) = \ecal^{i+1}(t') \setminus \ecal^{i}(t') = \{e\}$ for some $e \in \ecal^i(t)$. An \emph{intermediate graph} in such a reconfiguration sequence is any temporal graph $\gcal^i$ with $0 \le i \le k$. The length of such a reconfiguration sequence is $k$. 
We call a reconfiguration sequence valid if all intermediate graphs are always-connected. Similarly, an edge relabeling operation on an always-connected temporal graph $\gcal$ is valid if the resulting graph is always-connected.

As the edge relabeling operation does not change the number of temporal edges between two vertices, we assume that for all pairs $e \in \binom{V}{2}$ it holds that

$$|\{(e,t) \in \ecal_1 \mid t \in [T]\}| = |\{(e,t) \in \ecal_2 \mid t \in [T]\}|$$
as otherwise reconfiguration between $\gcal_1$ and $\gcal_2$ would not be possible. We also assume that any given temporal graph is always-connected.

\section{Reachability Partitions}\label{sec:reachability_partition}

In this section, we introduce important definitions and properties of a temporal graph in relation to bridges. These will serve as building blocks for our proofs in the later sections.

Since removing a bridge disconnects a snapshot, a valid edge relabeling operation can only be performed on non-bridges in order to preserve the always-connected property. To find a reconfiguration sequence between two temporal graphs $\gcal_1$ and $\gcal_2$, we need to change the edges only present in $\gcal_1$ to the ones in $\gcal_2$. Once all these different edges have been relabeled, the graphs will be equal. 
If one of the different edges is a non-bridge, it can be changed directly, which gives some progress in reconfiguration. However, as illustrated in \Cref{fig:difference_illustration}, all different edges may initially be bridges. Thus we need a systematic way to first turn these bridges into non-bridges.

\begin{figure}[h!]
    \centering
    \includegraphics[height = 4 cm]{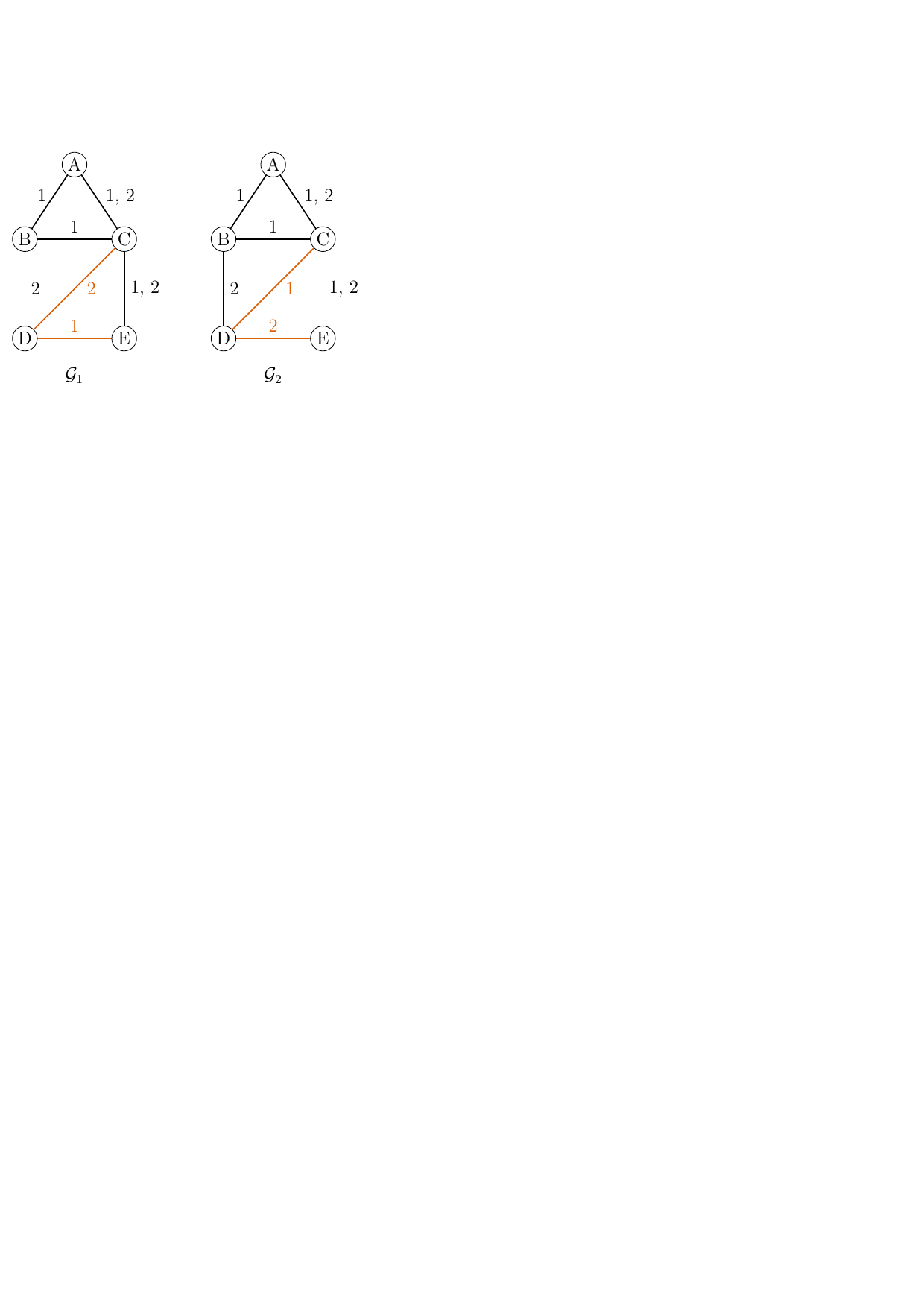}\\
    \caption{Example of temporal graphs $\gcal_1$ and $\gcal_2$ in which all different edges in $\gcal_1$ (marked in orange) are bridges.}
    \label{fig:difference_illustration}
\end{figure}

To address this, we define the $\emph{reachability partition}$ of a bridge $(\{u,v\},t)$. For a temporal graph $\gcal$, it is the partition of nodes into the two connected components of $\gcal(t) - (\{u,v\}, t)$, i.e.\ the nodes reachable from $u$ and $v$ in $\gcal(t) - (\{u,v\},t)$. See \Cref{fig:example_reachability_partition}.

\begin{restatable}[Reachability Partition]{definition}{reachability_partition}
    Given a temporal graph $\gcal = (V,\ecal)$ and a bridge $(e,t) = (\{u,v\}, t) \in \ecal$, we define the components of the reachability partition as $\Comp(u, e ,t) := \{x \in V \mid \exists \text{path between } u \text{ and } x \text{ in } \gcal(t) - (e,t)\}$ and analogously $\Comp(v,e,t) := \{x \in V \mid \exists \text{path between } v \text{ and } x \text{ in } \gcal(t) - (e,t)\}$. 
    
    We refer to the partition of $V$ into $\Comp(u, e, t)$ and $\Comp(v, e, t)$ as the reachability partition of $(e,t)$.
\end{restatable}

An edge $(\{u', v'\},t')$ is a \emph{crossing edge} in the reachability partition of a bridge $(\{u,v\}, t) = (e,t)$ if $u' \in \Comp(u,e, t)$ and $v' \in \Comp(v,e, t)$. 
Note that in an always-connected graph, $\gcal(t)$ has exactly one connected component. For a bridge $(\{u,v\}, t) = (e,t)$, it holds that $\gcal(t) - (e,t)$ has exactly two connected components. Since every node belongs to a connected component, these two connected components form a partition of $V$ and correspond to our reachability partition, as $u$ and $v$ are in different connected components. 

\begin{figure}[h!]
    \centering
    \includegraphics[height = 4 cm]{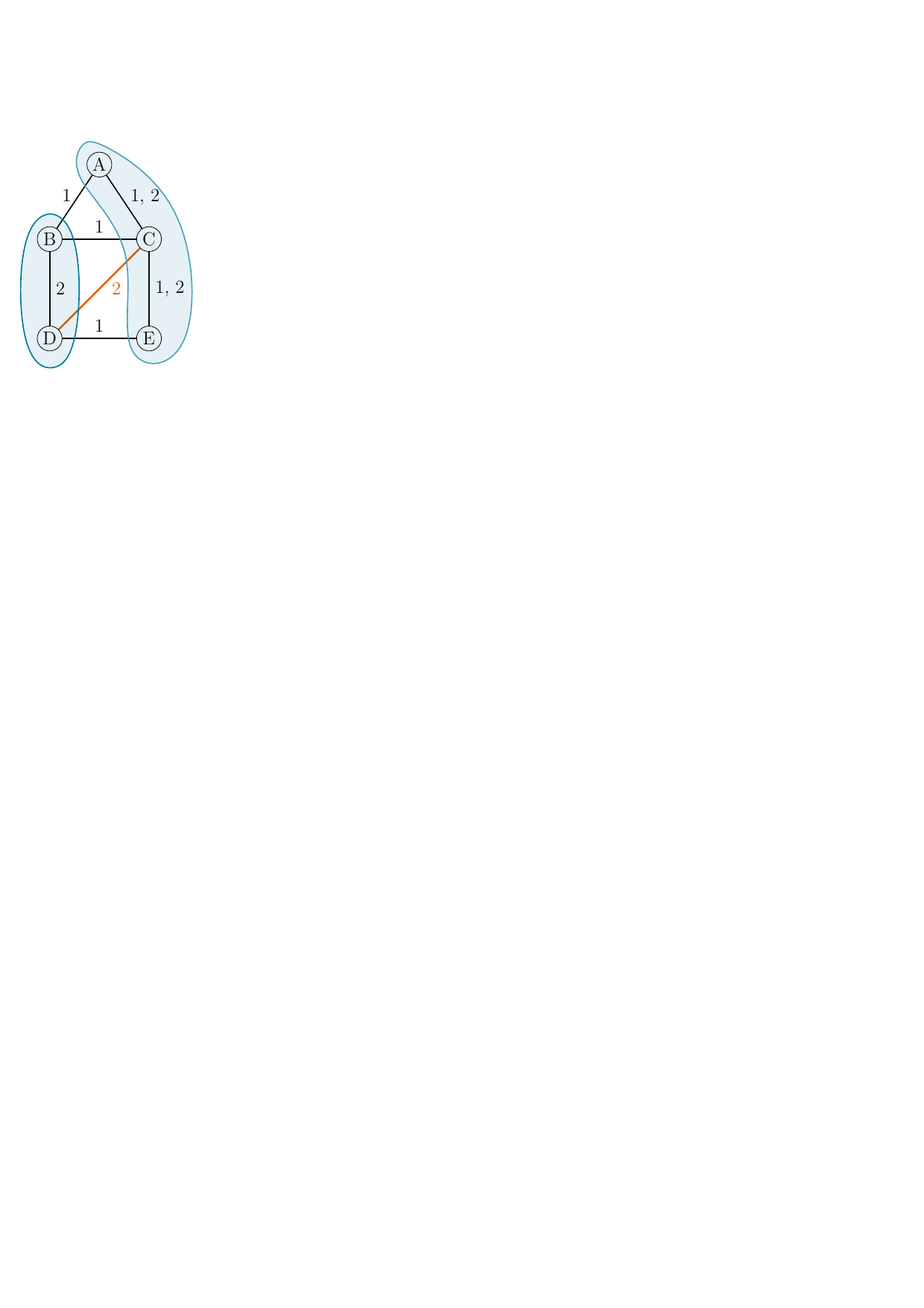}\\
    \caption{Example of the reachability partition of the edge $(e,2) = (\{C, D\},2)$.}
    \label{fig:example_reachability_partition}
\end{figure}

Using the definition of the reachability partition, we can now characterize when a bridge becomes a non-bridge after an edge relabeling operation.

\begin{restatable}[Bridges and Reachability Partition]{lemma}{cut_edges}
    \label{lm:cut_edges}
    Given a temporal graph $\gcal = (V, \ecal)$ and a bridge $(e,t)$, a valid edge relabeling operation $(e',t') \to (e', t)$ on $\gcal$ turns $(e,t)$ into a non-bridge if and only if $(e',t')$ is a crossing edge in the reachability partition of $(e,t)$.
\end{restatable}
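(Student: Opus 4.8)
The plan is to reduce the equivalence to a purely local statement about the single snapshot $\gcal(t)$. Write $e=\{u,v\}$ and $e'=\{u',v'\}$. The relabeling $(e',t')\to(e',t)$ changes only the snapshots at times $t$ and $t'$, and since $t'\neq t$ it leaves the edge $(e,t)$ in place; moreover, because a relabeling does not change the number of temporal edges, $(e',t)\notin\ecal$ beforehand (in particular $e'\neq e$), so the operation genuinely inserts $e'$ into snapshot $t$, replacing $\gcal(t)$ by $\gcal(t)+(e',t)$. Thus whether $(e,t)$ is turned into a non-bridge is decided entirely inside this one snapshot. I would then recall, from the discussion following the definition of the reachability partition, that since $\gcal$ is always-connected and $(e,t)$ is a bridge, $\gcal(t)$ is connected and $\gcal(t)-(e,t)$ has exactly the two connected components $\Comp(u,e,t)$ and $\Comp(v,e,t)$, which therefore form a partition of $V$.

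Next I would determine when $(e,t)$ is a bridge of $\gcal(t)+(e',t)$. As $\gcal(t)$ is connected, so is $\gcal(t)+(e',t)$, hence $(e,t)$ is a bridge there exactly when removing it disconnects the graph, i.e.\ when $\big(\gcal(t)-(e,t)\big)+(e',t)$ is disconnected. Because $\{\Comp(u,e,t),\Comp(v,e,t)\}$ partitions $V$, the two endpoints $u',v'$ of $e'$ either both lie in the same part or lie one in each part, and the second case is by definition precisely the statement that $(e',t)$ is a crossing edge in the reachability partition of $(e,t)$. If both endpoints lie in one part, adding $e'$ does not merge the two components, so $\big(\gcal(t)-(e,t)\big)+(e',t)$ remains disconnected and $(e,t)$ stays a bridge. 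If one endpoint lies in each part, adding $e'$ unites them into a single component, the graph becomes connected, and $(e,t)$ is a non-bridge. Combining the two cases gives exactly the claimed equivalence.

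I do not expect a genuine obstacle: once one observes that the relabeling affects only snapshot $t$, and only by adding $e'$ there, the claim is just the definition of a bridge applied to the already-known two-part reachability partition. The two places that need a little care are (i) excluding degenerate ``relabelings'' such as $e'=e$ or $(e',t)$ already being present, which the convention that a relabeling preserves the number of temporal edges rules out, so that the operation really does add a fresh edge to snapshot $t$; and (ii) reading ``crossing edge'' as ``exactly one endpoint in each part of the reachability partition'', so that an edge which is not crossing must have both endpoints in a common part — this is where we use that a bridge's reachability partition has exactly two parts.
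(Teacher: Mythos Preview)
Your proposal is correct and follows essentially the same approach as the paper: both proofs localize the question to snapshot $t$, observe that the relabeling adds $e'$ to $\gcal(t)$, and then argue directly that $\big(\gcal(t)-(e,t)\big)+(e',t)$ is connected if and only if the endpoints of $e'$ lie in different parts of the reachability partition. Your write-up is slightly more explicit about the degenerate cases (e.g.\ why $(e',t)\notin\ecal$ before the operation), but the substance is the same two-case connectivity argument the paper gives.
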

\begin{proof}
    Let $(e, t) = (\{u,v\}, t)$ and $(e', t') = (\{x,y\}, t')$ be edges in $\gcal$. Let $\gcal'$ be the graph obtained by applying the edge relabeling operation $(e', t') \to (e', t)$ on $\gcal$. 

    Suppose $(e',t')$ is a crossing edge in the reachability partition of $(e,t)$. W.l.o.g., assume $x \in \Comp(u, e, t)$ and $y \in \Comp(v, e, t)$. 
    We show that $(e,t)$ is not a bridge in $\gcal'$ by proving that $\gcal'(t) - (e,t)$ is connected.
    
    Since $\Comp(u, e, t)$ and $\Comp(v, e, t)$ are connected components of $\gcal(t)-(e,t)$, any two nodes in the same connected component remain connected in $\gcal'(t)-(e,t)$, because $\gcal'(t) - (e,t)$ contains all edges of $\gcal(t) - (e,t)$ and the new edge $(e', t')$.
    
    For nodes $a \in \Comp(u, e, t)$ and $b \in \Comp(v, e, t)$, there exist paths $a$ to $x$ within $\Comp(u, e, t)$ and $y$ to $b$ within $\Comp(v, e, t)$ in $\gcal(t) - (e,t)$. In $\gcal'(t) - (e,t)$ the added edge $(\{x,y\},t)$ connects these paths, giving a path from $a$ to $b$. Therefore $\gcal'(t) - (e,t)$ is connected and $(e,t)$ is no longer a bridge after the edge relabeling operation.
    
    Suppose $(e', t')$ is not a crossing edge in the reachability partition of $(e,t)$. W.l.o.g., assume $x,y \in \Comp(u, e, t)$. Since $\Comp(u, e, t)$ is already connected, there exists a path $x$ to $y$ in $\gcal - (e,t)$. Thus after adding $(\{x,y\}, t)$, the connected components do not change and $\gcal' - (e,t)$ is still not connected. Therefore $(e,t)$ remains a bridge after the relabeling.
\end{proof}

Since a bridge may require multiple edge relabeling operations before it becomes a non-bridge, we need to analyze how its reachability partition changes after edge relabeling operations which leave it a bridge. This scenario is formalized in the next lemma.

\begin{restatable}[Reachability Partition invariant]{lemma}{reachability_partition_unchanged}
    \label{lm:reachability_partition_unchanged}
    Given a temporal graph $\gcal = (V,\ecal)$ and a bridge $(e,t)$, the reachability partition of $(e,t) = (\{u,v\},t )$ does not change after any valid edge relabeling operation $(e', t_1) \to (e', t_2)$ on $\gcal$ after which $(e,t)$ is still a bridge.
\end{restatable}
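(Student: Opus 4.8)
The plan is to distinguish whether the relabeling alters the snapshot $\gcal(t)$ at all. Write $(e,t) = (\{u,v\},t)$ and set $A := \Comp(u,e,t)$ and $B := \Comp(v,e,t)$, so that $\{A,B\}$ is exactly the partition of $V$ into connected components of $\gcal(t) - (e,t)$, as noted in the remark following the definition of the reachability partition. Since an edge relabeling changes a time label we have $t_1 \neq t_2$, so at most one of $t_1,t_2$ equals $t$. If neither equals $t$, then $\gcal'(t) = \gcal(t)$, and since the reachability partition of $(e,t)$ is determined solely by $\gcal(t) - (e,t)$, it is unchanged and we are done.

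Otherwise exactly one of $t_1, t_2$ equals $t$, so $\gcal'(t)$ is obtained from $\gcal(t)$ by either deleting or inserting a single edge $(e',t)$. Here one checks that $e' \neq e$: deleting $(e,t)$ from the connected snapshot $\gcal(t)$ would disconnect it because $(e,t)$ is a bridge, contradicting validity, and in the insertion case $e' = e$ would mean the snapshot is in fact unchanged since $(e,t)\in\ecal$ already. Consequently $\gcal'(t) - (e,t)$ is obtained from $\gcal(t) - (e,t)$ by deleting or inserting the single edge $(e',t)$.

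I would then combine two elementary observations about the component partition of a graph under a single-edge change — deleting an edge can only refine the partition (every new component lies inside an old one) and inserting an edge can only coarsen it (every new component is a union of old ones) — with the standing hypotheses: $\gcal'(t)$ is connected (validity) and $(e,t)$ is still a bridge, so $\gcal'(t) - (e,t)$ has exactly two components. A partition into exactly two parts that refines or coarsens the two-part partition $\{A,B\}$ must equal $\{A,B\}$. Since $u$ lies in one part and $v$ in the other both before and after the operation, the reachability partition of $(e,t)$ is unchanged.

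The only mildly delicate point is this last set-theoretic step, that a two-part refinement or coarsening of a two-part partition is that partition itself; everything else is bookkeeping about which snapshots a relabeling can touch. As an alternative for the insertion subcase, one can instead invoke \Cref{lm:cut_edges}: because $(e,t)$ remains a bridge, the new edge $(e',t)$ is not a crossing edge of $(e,t)$, so both endpoints of $e'$ lie in the same part of $\{A,B\}$ and inserting it leaves every component intact. The deletion subcase, however, still requires the refinement-plus-bridge-count argument above.
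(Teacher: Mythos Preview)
Your proof is correct and follows essentially the same three-case analysis as the paper (neither $t_1$ nor $t_2$ equals $t$; deletion case $t_1=t$; insertion case $t_2=t$), with the same underlying observations that removing an edge cannot create new paths and adding one cannot destroy existing paths. Where the paper argues elementwise that each vertex stays in its original component, you recast the identical reasoning as ``deletion refines, insertion coarsens, and a two-block refinement or coarsening of a two-block partition is that partition'' --- a slightly more abstract packaging, but not a genuinely different route.
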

\begin{proof}
    Let $(e,t)$ be any bridge in $\gcal$ and $(e', t_1)$ an edge in $\gcal$. Let $\gcal'$ be the graph obtained by applying the valid edge relabeling operation $(e', t_1) \to (e', t_2)$ on $\gcal$. Since the operation is valid $(e', t_1)$ must be a non-bridge. Suppose $(e,t)$ remains a bridge in $\gcal'$.

    In $\gcal$ let $a \in \Comp(u, e, t)$. We show that $a \in \Comp(u, e, t)$ still holds in $\gcal'$. The argument for $a \in \Comp(v, e, t)$ is analogous.
    
    \textbf{Case 1:} $t_1 \neq t$ and $t_2 \neq t$. Since $\gcal(t) = \gcal'(t)$,  clearly $a \in \Comp(u, e, t)$ in $\gcal'$ holds.
    
    \textbf{Case 2:} $t_1 = t$. As the only change to $\gcal(t) - (e,t)$ is the removal of an edge, there is no new path in $\gcal'(t) - (e,t)$ between $v$ and $a$. Thus $a \not \in \Comp(v, e, t)$ still holds.  Since the components form a partition of $V$, it follows that $a \in \Comp(u, e, t)$ holds in $\gcal'$. 
    
    \textbf{Case 3:} $t_2 = t$. As the only change to $\gcal(t)-(e,t)$ is the addition of an edge, there is still a path in $\gcal'(t) - (e,t)$ between $u$ and $a$. Therefore $a \in \Comp(u, e, t)$ holds in $\gcal'$.

    As the components of the reachability partition of $(e,t)$ still form a partition of $V$ in $\gcal'$, the reachability partition does not change after the edge relabeling operation.
\end{proof}

\section{Decision Problem}\label{sec:decision_problem}

In this section, we give a polynomial-time algorithm for deciding if there exists a reconfiguration sequence between two temporal graphs such that every intermediate graph is always-connected. If such a valid reconfiguration sequence exists, the algorithm finds one of length at most $2M^2$ where $M$ is the number of temporal edges in the initial temporal graph.

The key idea is to classify edges as either \emph{changeable} or \emph{unchangeable}. A changeable edge is one that can eventually be changed, i.e.\, there exists a temporal graph, reachable through a valid reconfiguration sequence, in which it becomes a non-bridge. An unchangeable edge remains a bridge in all such reachable graphs.

With this classification we can state our main result, which characterizes exactly when a valid reconfiguration sequence exists. We prove this result in the remainder of this section.

\begin{restatable}[Finding a reconfiguration sequence in polynomial time]{theorem}{searchProblemPolynomial}
    \label{thm:main_result}
    For temporal graphs $\gcal_1 = (V, \ecal_1)$ and $\gcal_2 = (V, \ecal_2)$, there exists a valid reconfiguration sequence between $\gcal_1$ and $\gcal_2$ if and only if all edges in $\ecal_1 \setminus \ecal_2$ are changeable. If that is the case we can find a valid reconfiguration sequence of length at most $2M^2$ in $\bigO(M^3)$ time.
\end{restatable}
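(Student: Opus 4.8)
The plan is to prove the two directions separately, with the backward (sufficiency) direction carrying the algorithmic content.

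For necessity, suppose a valid reconfiguration sequence $\gcal_1 = \gcal^0, \gcal^1, \dots, \gcal^k = \gcal_2$ exists and let $(e,t) \in \ecal_1 \setminus \ecal_2$. Since $(e,t) \in \ecal^0$ but $(e,t) \notin \ecal^k$, there is a first index $i$ with $(e,t) \in \ecal^i$ and $(e,t) \notin \ecal^{i+1}$; the step from $\gcal^i$ to $\gcal^{i+1}$ must be the operation $(e,t) \to (e,t')$ for some $t'$, and $(e,t)$ is present throughout $\gcal^0, \dots, \gcal^i$. For this operation to be valid, $\gcal^{i+1}(t) = \gcal^i(t) - (e,t)$ must be connected, so $(e,t)$ is a non-bridge in $\gcal^i$. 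Hence $\gcal^0, \dots, \gcal^i$ is a valid reconfiguration sequence certifying that $(e,t)$ is changeable.

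For sufficiency, I would first turn \Cref{lm:cut_edges} and \Cref{lm:reachability_partition_unchanged} into a local, testable characterization of changeability. \Cref{lm:reachability_partition_unchanged} guarantees that the reachability partition of a bridge $(e,t)$ -- and therefore the set of pairs $e'$ that are crossing edges of it -- is invariant under every valid operation that leaves $(e,t)$ a bridge; so, by \Cref{lm:cut_edges}, the question \emph{can $(e,t)$ be made a non-bridge} reduces to \emph{can some edge currently on a crossing pair of $(e,t)$ be moved onto snapshot $t$}, which is again a changeability question one level down. This makes the relation well-founded and yields a monotone fixpoint (the dynamic program) that, for each edge, decides changeability and produces a witnessing sequence of $\bigO(M)$ relabelings -- ordering the operations so that a crossing edge already moved onto snapshot $t$ is never disturbed again. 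Each recomputation of the fixpoint costs $\bigO(M^2)$, and bridges and reachability partitions of a snapshot are found in linear time by the algorithm of \cite{tarjan_note_1974}.

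The second ingredient is reversibility: a valid operation $\gcal \to \gcal'$ certifies that both $\gcal$ and $\gcal'$ are always-connected, so the reverse operation $\gcal' \to \gcal$ is valid as well. It therefore suffices to reconfigure $\gcal_1$ and $\gcal_2$ each, by a valid sequence, to a common canonical temporal graph; concatenating the first sequence with the reverse of the second gives a valid reconfiguration sequence from $\gcal_1$ to $\gcal_2$ whose length is the sum of the two. To reach the canonical graph I would proceed in at most $M$ phases: in each phase pick a pair $e$ on which the current images of $\gcal_1$ and $\gcal_2$ still differ, use the dynamic-programming witness to turn the targeted edge into a non-bridge, relabel it, and -- the accounting trick flagged in the introduction -- perform mirrored operations on the other copy so that the number of disagreements between the two running graphs is non-increasing throughout the phase and strictly decreases by its end. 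With $\bigO(M)$ operations per phase on each side this gives total length at most $2M^2$, and rerunning the dynamic program once per phase gives the $\bigO(M^3)$ bound. The main obstacle I expect is exactly the correctness of this last construction: the auxiliary relabelings that turn a bridge into a non-bridge live on pairs other than the one being fixed, and they cannot simply be undone afterwards -- once the targeted edge leaves snapshot $t$, a crossing edge that was moved onto $t$ can itself become a bridge, so its reverse operation is no longer valid. The work is thus in showing these auxiliary operations can be routed (mirrored on the second copy, or re-sequenced) so that every intermediate graph stays always-connected while the disagreement count never goes up; this is where \Cref{lm:cut_edges} and the invariance in \Cref{lm:reachability_partition_unchanged} are used, together with the observation that on pairs where the two copies already agree the mirrored operations are identical and hence preserve agreement.
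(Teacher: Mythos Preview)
Your high-level plan matches the paper's: establish changeability via a DP built on \Cref{lm:cut_edges} and \Cref{lm:reachability_partition_unchanged}, exploit reversibility, and drive both graphs to a common point by mirroring operations. Your necessity argument is fine and slightly more explicit than the paper's.

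The genuine gap is exactly the one you flag as ``the main obstacle'': you have not supplied the mechanism that makes the mirrored operations on $\gcal_2$ \emph{valid}. Your closing remark -- that on pairs where the two copies already agree the mirrored operations are identical and hence preserve agreement -- addresses the wrong invariant: preserving the disagreement count is easy, but what you must show is that each auxiliary relabeling on $\gcal_1^j$ is also a relabeling of a non-bridge in $\gcal_2^j$. Nothing in your proposal forces that. The paper's fix is a single extra word in the phase description that you omit: among all edges of $\ecal_1\setminus\ecal_2$, pick one whose changeability index $k$ is \emph{minimum}. Then for every intermediate step $j<k$, every edge of $\gcal_1^j\setminus\gcal_2^j$ is still a bridge (otherwise it would be $j$-changeable with $j<k$, contradicting minimality); by contraposition every non-bridge of $\gcal_1^j$ is present in $\gcal_2^j$. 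The edge being relabeled at step $j$ is a non-bridge, hence lies on a cycle consisting entirely of non-bridges, and that whole cycle therefore exists in $\gcal_2^j$ -- so the same relabeling is valid there. This is \Cref{lm:difference} in the paper, and it is the load-bearing idea you are missing.

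Two smaller remarks. First, your worry about ``undoing'' the auxiliary relabelings is a red herring: the paper never undoes them, it simply applies the identical operations to $\gcal_2$, so the disagreement set is unchanged by the auxiliary steps and drops by one at the final step of the phase. Second, your DP sketch should make explicit that the witness sequence for a $k$-changeable edge has length exactly $k$ (one relabeling per level, following the back-references); combined with $k\le M$ this gives the $\bigO(M)$ per-phase bound you need for $2M^2$.
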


To prove \Cref{thm:main_result}, we proceed in two steps. 
First, we establish \Cref{lm:change}, which shows that for any changeable edge, a shortest reconfiguration sequence to turn it into a non-bridge can be computed in polynomial time. In addition, \Cref{lm:change} allows us to identify exactly which edges are changeable.
Secondly, we prove \Cref{lm:difference}. For temporal graphs $\gcal_1$ and $\gcal_2$, the lemma establishes a way to reduce the number of different edges between them whenever there exists a changeable edge in $\gcal_1$ which is not in $\gcal_2$. 

By repeatedly applying this lemma, we obtain a valid reconfiguration sequence that transforms $\gcal_1$ and $\gcal_2$ into the same representation $\gcal_c$. This gives us a valid reconfiguration sequence between $\gcal_1$ and $\gcal_2$, proving \Cref{thm:main_result}.

\subsection*{Finding shortest reconfiguration sequences for every edge}

As it will prove useful for computing a reconfiguration sequence, we introduce a finer classification of changeability. We classify edges by the minimum number of edge relabeling operations needed to turn them into non-bridges.

\begin{restatable}[$k$-Changeable Edge]{definition}{k-changeable}
    Given a temporal graph $\gcal = (V, \ecal)$, a temporal edge $(e,t) \in \ecal$ is \emph{$k$-changeable} if there exists a valid reconfiguration sequence of length $k$ transforming $\gcal$ into a temporal graph $\gcal'$ in which $(e,t)$ is a non-bridge, and no such sequence of length $l < k$ exists. We call an edge \emph{changeable} if it is $k$-changeable for some $k \in \Na$ and \emph{unchangeable} if it is not $k$-changeable for any $k \in \Na$.
\end{restatable}

\begin{figure}[h!]
    \centering
    \includegraphics[height=4cm]{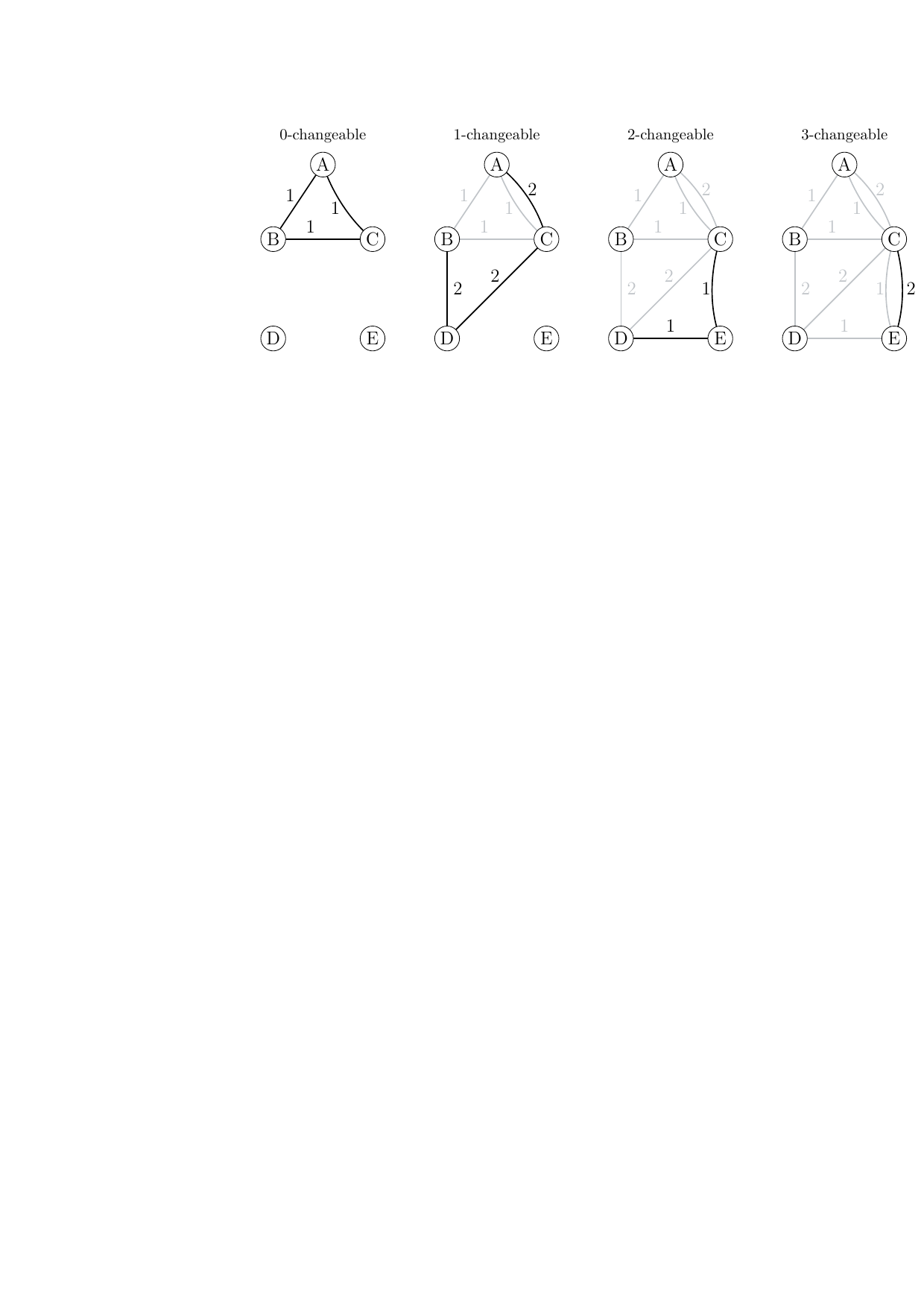}\\
    \caption{
    Illustration of $k$-changeable edges in a temporal graph. The $0$-changeable edges are exactly the non-bridges. For $i\in \Na$, the $(i+1)$-changeable edges are those bridges that become non-bridges by changing an $i$-changeable edge.
    }
    \label{fig:example_k_changeable}
\end{figure}

The following lemma uses the reachability partition to determine the number of operations needed for an edge to become changeable. It asserts that to identify all $k$-changeable edges, only $k-1$ changeable edges need to be considered, which already hints at the order of computation we will use later.

\begin{restatable}[Reachability Partition and $k$-changeability]{lemma}{reach_part_change}
    \label{lm:reach_part_change}
    Given a temporal graph $\gcal = (V, \ecal)$ and a $k \in \Na$, let $(e,t)$ be an edge that is not $l$-changeable for any $l \le k$. Then $(e,t)$ is $k+1$-changeable if and only if there exists a $k$-changeable edge $(e',t')$ that is a crossing edge in the reachability partition of $(e,t)$.
\end{restatable}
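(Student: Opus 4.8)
The plan is to prove both directions by leveraging the two structural lemmas already established, \Cref{lm:cut_edges} and \Cref{lm:reachability_partition_unchanged}, which together say that a bridge $(e,t)$ becomes a non-bridge exactly when a crossing edge is relabeled onto layer $t$, and that its reachability partition is invariant under every valid relabeling that leaves it a bridge. First I would note a key consequence of \Cref{lm:reachability_partition_unchanged}: along any valid reconfiguration sequence in which $(e,t)$ stays a bridge the whole time, the reachability partition of $(e,t)$ is literally the same set partition in every graph of the sequence. Hence "being a crossing edge of $(e,t)$'s reachability partition" is a well-defined, sequence-independent notion as long as $(e,t)$ remains a bridge, and I will use this freely.

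For the backward direction, suppose $(e',t')$ is $k$-changeable and is a crossing edge in the reachability partition of $(e,t)$. Take a shortest valid sequence $\sigma$ of length $k$ that turns $(e',t')$ into a non-bridge; since $(e,t)$ is not $l$-changeable for any $l \le k$, it is still a bridge throughout $\sigma$, so by the invariant its reachability partition is unchanged and $(e',\cdot)$ is still a crossing edge of it at the end of $\sigma$, where $(e',\cdot)$ now sits on some layer $t'' \neq t$ (if it already landed on $t$ we are done even sooner, but then it was $\le k$-changeable by \Cref{lm:cut_edges}, a contradiction — so $(e',t'')$ is a non-bridge with $t'' \neq t$). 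Now append one more relabeling $(e',t'') \to (e',t)$: this is valid because $(e',t'')$ is a non-bridge so removing it keeps its snapshot connected, and adding it to snapshot $t$ cannot disconnect $\gcal(t)$. By \Cref{lm:cut_edges}, since $(e',t)$ is a crossing edge of the reachability partition of $(e,t)$, this final step turns $(e,t)$ into a non-bridge. This is a valid sequence of length $k+1$, so $(e,t)$ is $k+1$-changeable (it cannot be done faster by hypothesis).

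For the forward direction, suppose $(e,t)$ is $k+1$-changeable and fix a shortest valid sequence $\sigma = (\gcal^0,\dots,\gcal^{k+1})$ turning $(e,t)$ into a non-bridge. Since $(e,t)$ is not $l$-changeable for $l \le k$, it is a bridge in $\gcal^0,\dots,\gcal^k$ and a non-bridge in $\gcal^{k+1}$, so the last step $\gcal^k \to \gcal^{k+1}$ is precisely the step that destroys the bridge. By \Cref{lm:cut_edges} this step is a relabeling $(e',t_0) \to (e',t)$ for some edge position $e'$ with $(e',t)$ a crossing edge of the reachability partition of $(e,t)$ in $\gcal^k$ — which, by the invariant, is the same reachability partition as in $\gcal^0$. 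It remains to argue that $(e',\cdot)$ is $k$-changeable in $\gcal^0$. The first $k$ steps of $\sigma$ form a valid sequence from $\gcal^0$ to $\gcal^k$, in which $(e',t_0)$ is a non-bridge (it must be, since the relabeling of it in step $k{+}1$ is valid, and only non-bridges can be validly relabeled). So $(e',t_0)$ is $l$-changeable for some $l \le k$. The main obstacle is to rule out $l < k$: if $(e',\cdot)$ could be turned into a non-bridge in fewer than $k$ steps from $\gcal^0$, I want to splice that shorter prefix together with the final bridge-destroying step to beat $k+1$. Concretely, take a shortest sequence $\tau$ of length $l < k$ from $\gcal^0$ making $(e',\cdot)$ a non-bridge; throughout $\tau$ the edge $(e,t)$ stays a bridge (as $l \le k$), so its reachability partition is unchanged, and at the end of $\tau$ the edge $(e',\cdot)$ is a non-bridge on some layer $t_1 \neq t$ — here I again use that if it were on layer $t$ then \Cref{lm:cut_edges} would already have made $(e,t)$ a non-bridge, contradicting $l < k+1$ minimality in a moment. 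Appending the valid step $(e',t_1)\to(e',t)$ yields, by \Cref{lm:cut_edges}, a valid sequence of length $l+1 \le k < k+1$ turning $(e,t)$ into a non-bridge, contradicting that $(e,t)$ is $k+1$-changeable. Hence $l = k$ exactly, so $(e',\cdot)$ is $k$-changeable and is a crossing edge of the reachability partition of $(e,t)$, completing the proof. The one subtlety to handle carefully is the bookkeeping of which layer $(e',\cdot)$ occupies after each spliced subsequence, and ensuring the appended final relabeling genuinely lands on layer $t$ and is valid; everything else follows mechanically from the two cited lemmas.
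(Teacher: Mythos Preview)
Your proof is correct and follows essentially the same approach as the paper's: both directions use \Cref{lm:reachability_partition_unchanged} to freeze the reachability partition of $(e,t)$ along any sequence that leaves it a bridge, then invoke \Cref{lm:cut_edges} for the final crossing-edge step, with the forward direction's minimality handled by your explicit splicing argument (which the paper states more tersely as ``since $(e,t)$ is not $l$-changeable for any $l\le k$, $(e',t')$ is not $l'$-changeable for any $l'<k$''). Your layer-tracking of $(e',\cdot)$ is unnecessary, since the definition of $k$-changeable requires the specific temporal edge $(e',t')$ itself to be a non-bridge in the final graph of the witnessing sequence, so $t'' = t'$ automatically and the case $t''=t$ never arises.
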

\begin{proof}
    Let $k \in \Na$ and let $(e,t)$ be a bridge in $\gcal$ which is not $l$-changeable for any $l \le k$.

    Suppose there exists a $k$-changeable edge $(e',t')$ which is a crossing edge in the reachability partition of $(e,t)$. As $(e', t')$ is $k$-changeable, there exists a valid reconfiguration sequence of length $k$ from $\gcal$ to a temporal graph $\gcal'$ in which $(e',t')$ is a non-bridge. Because $(e, t)$ is not $l$-changeable for any $l \le k$, it remains a bridge in every intermediate graph of this sequence. By \Cref{lm:reachability_partition_unchanged}, the reachability partition of $(e,t)$ remains unchanged throughout this sequence. Thus in $\gcal'$, $(e', t')$ is a non-bridge and a crossing edge in the reachability partition of $(e,t)$. By \Cref{lm:cut_edges}, the relabeling operation $(e',t') \to (e', t)$ turns $(e,t)$ into a non-bridge. Therefore $(e,t)$ is $k+1$-changeable.

    Suppose $(e,t)$ is $k+1$-changeable. Then there exists a valid reconfiguration sequence $(\gcal = \gcal^0, \gcal^1, \cdots, \gcal^{k+1})$ of length $k+1$ in which $(e,t)$ becomes a non-bridge in $\gcal^{k+1}$. Since $(e,t)$ is not $l$-changeable for any $l \le k$, it remains a bridge in all intermediate graphs $\gcal^0 \cdots \gcal^k$. By \Cref{lm:reachability_partition_unchanged}, the reachability partition of $(e,t)$ is the same in all these graphs. By \Cref{lm:cut_edges}, the last step in the sequence that makes $(e,t)$ a non-bridge involves relabeling an edge $(e', t') \to (e', t)$ in $\gcal^k$, which is a crossing edge in the reachability partition of $(e, t)$. Since $(e,t)$ is not $l$-changeable for any $l \le k$, $(e', t')$ is not $l'$-changeable for any $l' < k$. As it becomes a non-bridge in $\gcal^k$ it is thus $k$-changeable. Therefore there exists a $k$-changeable edge that is a crossing edge in the reachability partition of $(e,t)$.
\end{proof}

The following lemma provides a characterization of unchangeable edges. If there exists a $k \in \Na$ with no $k$-changeable edges, there are no $r$-changeable edges for any $r > k$ as well. In that case, all edges that are not $l$-changeable for any $l < k$ are unchangeable.

\begin{restatable}[Continuity of Changeable Edges]{lemma}{continuity}
    \label{lm:continuity}
    Given a temporal graph $\gcal = (V, \ecal)$ and a $k \in \Na^+$, if $\ecal$ can be partitioned into sets $U$ and $C$ such that every edge in $U$ is not $l$-changeable for any $l \le k$, and every edge in $C$ is $l'$-changeable for some $l' \le k - 1$, then all edges in $U$ are unchangeable.
\end{restatable}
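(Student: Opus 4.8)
The plan is to argue by contradiction: suppose some edge in $U$ is changeable, and take a witnessing reconfiguration sequence of minimum total length over all edges in $U$. Let $(e,t) \in U$ be an edge that becomes a non-bridge in the final graph $\gcal^{m}$ of such a minimum sequence $(\gcal = \gcal^0, \dots, \gcal^m)$, so $(e,t)$ is $m$-changeable with $m \ge k+1$ (it cannot be $l$-changeable for $l \le k$ by hypothesis). By \Cref{lm:cut_edges}, the last step that turns $(e,t)$ into a non-bridge relabels some edge $(e',t') \to (e',t)$ in $\gcal^{m-1}$, and $(e',t)$ is a crossing edge in the reachability partition of $(e,t)$. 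Since $(e,t)$ stays a bridge through $\gcal^0, \dots, \gcal^{m-1}$, \Cref{lm:reachability_partition_unchanged} tells us its reachability partition is the same in $\gcal$ as in $\gcal^{m-1}$; hence $(e',t)$ is already a crossing edge in the reachability partition of $(e,t)$ back in $\gcal$.

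Now I would do a case analysis on whether $(e',t') \in U$ or $(e',t') \in C$. If $(e',t') \in C$, then $(e',t')$ is $l'$-changeable for some $l' \le k-1$, so there is a valid sequence of length $l' \le k-1$ from $\gcal$ to a graph in which $(e',t')$ is a non-bridge; by \Cref{lm:reachability_partition_unchanged} the partition of $(e,t)$ is unchanged along it (as $(e,t)$ can't become a non-bridge in fewer than $k+1$ steps), so $(e',t)$ is still a crossing edge, and by \Cref{lm:cut_edges} one more relabel $(e',t') \to (e',t)$ makes $(e,t)$ a non-bridge. This gives $(e,t)$ as $(l'+1)$-changeable with $l'+1 \le k$, contradicting $(e,t) \in U$. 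If instead $(e',t') \in U$, then in the sequence $\gcal^0, \dots, \gcal^{m-1}$ the edge $(e',t')$ went from being a bridge (it is in $U$, hence not $0$-changeable) to a non-bridge in $\gcal^{m-1}$; so the prefix witnesses that $(e',t')$ is $j$-changeable for some $j \le m-1 < m$. This is a strictly shorter witnessing sequence for an edge of $U$, contradicting minimality of $m$.

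Either case yields a contradiction, so no edge of $U$ is changeable; i.e., every edge in $U$ is unchangeable. I expect the main subtlety to be handling the $(e',t') \in U$ branch cleanly: one must be sure that the prefix $\gcal^0, \dots, \gcal^{m-1}$ really is a \emph{valid} reconfiguration sequence of length $< m$ in which an edge of $U$ becomes a non-bridge — this is immediate since validity is inherited by prefixes and $(e',t')$ is a non-bridge in $\gcal^{m-1}$ — and to phrase the minimality so that it ranges over all edges of $U$ simultaneously, not just $(e,t)$. A minor point to state carefully is the base observation that edges of $U$ are not $0$-changeable (they are bridges in $\gcal$), which is what makes the prefix argument give $j \ge 1$ and keeps the case split exhaustive.
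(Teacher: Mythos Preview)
Your argument is correct and is essentially the paper's proof repackaged: the paper does a forward induction on $r$ invoking \Cref{lm:reach_part_change} as a black box, while you run the dual minimal-counterexample argument and inline the content of that lemma (the last step is a crossing relabel, and the relabelled edge must itself have strictly smaller changeability). Both routes rest on the same two ingredients, \Cref{lm:cut_edges} and \Cref{lm:reachability_partition_unchanged}.

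One small point to tighten: your case split ``$(e',t') \in U$ or $(e',t') \in C$'' presupposes $(e',t') \in \ecal$, but $(e',t')$ is a priori only an edge of $\gcal^{m-1}$, which could have been created by an earlier relabel. Minimality of $m$ actually rules this out --- if $(e',t')$ were first created at step $i$ by relabeling some non-bridge $(e',s)$ in $\gcal^{i-1}$, then since the reachability partition of $(e,t)$ is invariant you could instead relabel $(e',s) \to (e',t)$ at step $i$ and make $(e,t)$ a non-bridge in $\le i < m$ steps, a contradiction. This one-line observation makes your case split exhaustive; the paper's own proof of \Cref{lm:reach_part_change} leaves the same point implicit.
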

\begin{proof}
    Let $k \in \Na^+$ such that $\ecal$ can be partitioned into sets $U$ and $C$ as in the statement of the lemma.
    We show by induction that all edges in $U$ are not $r$-changeable for any $r \in \Na$. The statement holds for all $r \le k$ by assumption.
    
    Let $r \in \Na$ with $r \ge k$ and suppose all edges in $U$ are not $l$-changeable for any $l \le r$. Let $(e,t)$ be any edge in $U$. As $r \ge k$ no edge in $C$ is $r$-changeable. Thus by \Cref{lm:reach_part_change}, $(e,t)$ is not $r+1$-changeable. This shows by induction that every edge in $U$ is not $r$-changeable for any $r \in \Na$ and therefore unchangeable.
\end{proof}

With the lemmas introduced previously, we can now develop an algorithm to compute, for any edge, the shortest reconfiguration sequence to turn it  into a non-bridge. Before describing the algorithm, we introduce a convenient structure to find crossing edges in the reachability partitions of bridges.

For a temporal graph $\gcal = (V, \ecal)$ and an edge $(e,t) \in \ecal$, we define \emph{$\Cross(e,t)$} as the set of bridges $(e', t') \in \ecal$, such that $(e, t)$ is a crossing edge in the reachability partition of $(e', t')$. The following algorithm computes $\Cross$.

For every bridge $(e', t') = (\{u',v'\}, t') \in \ecal$, first do a DFS in $\gcal(t') - (e', t')$ starting from $u'$ and from $v'$ to determine $\Comp(u',e',t')$ and $\Comp(v',e',t')$. Then, for each edge $(e,t) \in \ecal$, add $(e',t')$ to $\Cross(e,t)$ if $(e,t)$ is a crossing edge in the reachability partition of $(e',t')$.

Correctness follows directly from the definition of a crossing edge in the reachability partition. Each DFS identifies all nodes reachable from the endpoints of a bridge, giving its reachability partition. As the algorithm iterates over all edges for every bridge, $\Cross$ is computed correctly. Since there are at most $M$ bridges, and for each bridge the DFS and the iteration over all edges both take linear time, the algorithm runs in $\bigO(M^2)$ time, and therefore in polynomial time.

Now, to compute a shortest reconfiguration sequence to change any temporal edge, we define \emph{$\Change(k)$} for $0 \le k \le m$. The set $\Change(0)$ contains all non-bridges. For $0 < k \le M$, the set $\Change(k)$ contains all $k$-changeable edges, each with an index pointing to an edge in $\Change(k-1)$ that must be changed before it. The following algorithm computes $\Change$ (refer to \Cref{fig:example_k_changeable} for an example execution).

Using Tarjan's Algorithm to find all bridges, first compute $\Change(0)$. Then, starting with $k=0$, iteratively increase $k$ until $\Change(k)$ is empty. In each iteration, go through all edges $(e,t)$ in $\Change(k)$, and for each, iterate through all edges $(e', t')$ in $\Cross(e,t)$. If $(e', t')$ is not in $\Change(l)$ for any $l \le k$, we append it to $\Change(k+1)$, together with a back reference to $(e,t)$. Once $\Change(k)$ is empty, any edge not in $\Change(l)$ for any $l \le k$ is unchangeable.

\begin{restatable}[$\Change$ is computable in Polynomial Time]{lemma}{change_computable}
    \label{lm:change}
    For a temporal graph $\gcal = (V, \ecal)$ the above algorithm correctly computes $\Change(k)$ for all $0 \le k \le M$ in $\bigO(M^2)$ time.
\end{restatable}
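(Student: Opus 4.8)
The plan is to prove correctness and the runtime bound together, since both follow from analyzing the algorithm iteration by iteration. For correctness, I would argue by induction on $k$ that after the algorithm's iteration at level $k$, the set $\Change(k)$ contains exactly the $k$-changeable edges, each with a valid back reference. The base case $k=0$ is immediate: $\Change(0)$ is populated with all non-bridges via Tarjan's algorithm, and the $0$-changeable edges are precisely the non-bridges (a non-bridge requires a reconfiguration sequence of length $0$, and nothing shorter exists). For the inductive step, suppose $\Change(0), \dots, \Change(k)$ have been computed correctly. The algorithm forms $\Change(k+1)$ by taking every edge $(e',t')$ appearing in $\Cross(e,t)$ for some $(e,t) \in \Change(k)$, provided $(e',t')$ is not already in $\Change(l)$ for any $l \le k$. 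By the inductive hypothesis, "not in $\Change(l)$ for any $l \le k$" is equivalent to "not $l$-changeable for any $l \le k$." By construction of $\Cross$, $(e',t')$ is added iff some $k$-changeable edge $(e,t)$ is a crossing edge in the reachability partition of $(e',t')$, which by \Cref{lm:reach_part_change} is exactly the condition for $(e',t')$ to be $(k+1)$-changeable. Hence $\Change(k+1)$ is exactly the set of $(k+1)$-changeable edges.

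The back reference stored with each $(e',t') \in \Change(k+1)$ points to a $k$-changeable edge $(e,t)$ that is a crossing edge in the reachability partition of $(e',t')$; unwinding these references down to $\Change(0)$ yields, via \Cref{lm:cut_edges} and \Cref{lm:reachability_partition_unchanged}, a valid reconfiguration sequence of length exactly $k+1$ turning $(e',t')$ into a non-bridge, which is what we want recorded. I would also need to argue termination and the claim about unchangeable edges: since $\ecal$ is finite and each edge is placed in at most one $\Change(k)$, the process stops at some $k^\ast \le M$ where $\Change(k^\ast)$ is empty. At that point, the edges not yet placed form a set $U$ satisfying the hypothesis of \Cref{lm:continuity} (taking $k = k^\ast$), so all of them are unchangeable — justifying the algorithm's final statement. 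The bound $k^\ast \le M$ also confirms that computing $\Change(k)$ for all $0 \le k \le M$ captures everything.

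For the runtime, the preprocessing computes all bridges via Tarjan in $\bigO(M)$ per snapshot, $\bigO(M)$ total across snapshots (since the snapshots partition $\ecal$), and computes $\Cross$ in $\bigO(M^2)$ as already argued in the text. The main loop: each edge $(e',t')$ is examined as a candidate at most once for each edge $(e,t)$ that lists it in its $\Cross$ set — but more carefully, the work done is, summed over all $k$, $\sum_{(e,t) \in \ecal} |\Cross(e,t)|$ plus the cost of the membership check "is $(e',t')$ already in some $\Change(l)$?". Maintaining an auxiliary array indexed by edges that records for each edge the level at which it was added (or $\infty$) makes each such check $\bigO(1)$. Since $\sum_{(e,t)} |\Cross(e,t)| \le M^2$, the main loop costs $\bigO(M^2)$, giving the claimed $\bigO(M^2)$ overall.

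**The main obstacle** I anticipate is being careful about the equivalence "not in $\Change(l)$ for any $l \le k$" $\iff$ "not $l$-changeable for any $l \le k$" at the moment the algorithm processes level $k$: one must ensure the inductive hypothesis has been fully established for all smaller levels before this check is used, and that no edge is ever added to a later $\Change(k')$ that should have appeared earlier — this is exactly what the membership check guards against, and its correctness rests on \Cref{lm:reach_part_change} guaranteeing that a $(k+1)$-changeable edge is witnessed by a $k$-changeable (not merely changeable) crossing edge. A secondary subtlety is confirming that the stored back references genuinely compose into a single valid reconfiguration sequence rather than conflicting relabelings; here the point is that while turning $(e',t')$ into a non-bridge we only relabel edges at other levels, and \Cref{lm:reachability_partition_unchanged} ensures the relevant reachability partitions are untouched until the final relabeling step fires — but this is really content of \Cref{lm:change}'s use downstream rather than of the computation itself, so for this lemma it suffices to record the references and defer sequence extraction.
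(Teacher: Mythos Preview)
Your proposal is correct and follows essentially the same approach as the paper: induction on $k$ with the base case via Tarjan's algorithm and the inductive step via \Cref{lm:reach_part_change}, termination and the unchangeable-edge claim via \Cref{lm:continuity}, and the runtime via the $\bigO(M^2)$ bound on $\Cross$ together with the observation that each edge enters some $\Change(k)$ at most once. Your additional remarks on the $\bigO(1)$ membership check and on composing back references are sound elaborations but do not depart from the paper's line of argument.
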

\begin{proof}
    First, we show by induction that $\Change(k)$ contains all $k$-changeable edges for any $k \in \Na$. As Tarjan's Algorithm correctly identifies all non-bridges, this holds for $\Change(0)$.
    
    Suppose for some $i \in \Na$ that $\Change(j)$ contains all $j$-changeable edges for all $j \le i$. As we iterate through all edges $(e', t')$ in $\Change(i)$ and append all edges $(e,t)$ in $\Cross(e', t')$ which are not in $\Change(j)$ for any $j \le i$ to $\Change(i+1)$, it follows from \Cref{lm:reach_part_change} that $\Change(i+1)$ will contain exactly the $(i+1)$-changeable edges.

    When the algorithm terminates at some $i$, \Cref{lm:continuity} implies that all edges for which no sequence to change them was found, are unchangeable. Since $\Change$ is correct, we can reconstruct a shortest reconfiguration sequence to make any edge into a non-bridge using the back references stored in it.

    $\Cross$ can be computed in $\bigO(M^2)$ time, and for every edge $(e,t)$, $\Cross(e,t)$ contains at most $M$ entries. Computing $\Change(0)$ takes $\bigO(M)$ time. Since each edge $(e,t)$ is added to some $\Change(k)$ at most once, we iterate through each entry of $\Cross(e,t)$ at most once. Therefore the total running time is in $\bigO(M^2)$.
 \end{proof}

In \cref{lm:change} we showed that, for each edge, a shortest reconfiguration sequence to turn it into a non-bridge can be computed in polynomial time, if one exists. This allows us to identify which edges are changeable and which are unchangeable, completing the first part for proving \cref{thm:main_result}.

\subsection*{Finding a reconfiguration sequence to decrease difference}

Suppose we want to find a valid reconfiguration sequence from temporal graph $\gcal_1 = (V, \ecal_1)$ to $\gcal_2 = (V, \ecal_2)$. We define the \emph{difference} between them, denoted by $\delta(\gcal_1, \gcal_2)$, as the number of edges in $\gcal_1$ that are not in $\gcal_2$. Recall our assumption that $\gcal_1$ and $\gcal_2$ have the same number of edges between each pair of vertices. Under this assumption, the graphs are equal if their difference is $0$. Given the existence of a changeable edge in $\gcal_1$ that is not in $\gcal_2$, the following algorithm computes reconfiguration sequences $(\gcal_1 = \gcal_1^0, \gcal_1^1, \cdots \gcal_1^{k+1})$ and $(\gcal_2 = \gcal_2^0, \gcal_2^1, \cdots \gcal_2^k)$ for some $k \in \Na$ such that $\delta(\gcal_1^{k+1}, \gcal_2^k) < \delta(\gcal_1, \gcal_2)$.

First, compute $\Change(k)$ for all $0 \le k \le M$. Let $(e,t)$ be an edge in $\ecal_1 \setminus \ecal_2$ with a shortest valid reconfiguration sequence $(\gcal_1 = \gcal_1^0, \gcal_1^1, \cdots \gcal_1^k)$ turning it into a non-bridge. If $(e,t)$ is already a non-bridge, then $k = 0$ and the sequence consists only of $\gcal_1$. Otherwise, apply the same edge relabeling operations used in the reconfiguration sequence $(\gcal_1, \cdots, \gcal^k_1)$ to the temporal graph $\gcal_2$, obtaining $(\gcal_2 = \gcal_2^0, \gcal_2^1, \cdots \gcal_2^k)$. Finally, from $\gcal_1^k$ to $\gcal_1^{k+1}$, relabel $(e,t)$ to an edge that exists in $\gcal_2^k$ but not in $\gcal_1^k$.

It may seem unintuitive to apply relabeling operations intended for $\gcal_1$ directly to $\gcal_2$. However, we will see that our choice of $(e, t)$ guarantees that these operations are also valid when applied to $\gcal_2$.

\begin{restatable}[Sequence for decreasing difference]{lemma}{difference}
    \label{lm:difference}
    For temporal graphs $\gcal_1$ and $\gcal_2$, the above algorithm correctly computes valid reconfiguration sequences $(\gcal_1, \cdots, \gcal_1^{k+1})$ and $(\gcal_2, \cdots, \gcal_2^{k})$ for some $k \in \Na$ with $\delta(\gcal_1^{k+1}, \gcal_2^k) < \delta(\gcal_1, \gcal_2)$ in $\bigO(M^2)$ time.
\end{restatable}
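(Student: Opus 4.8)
The plan is to verify three things about the two sequences the algorithm outputs --- validity of the $\gcal_1$-sequence, validity of the $\gcal_2$-sequence, and $\delta(\gcal_1^{k+1},\gcal_2^k)=\delta(\gcal_1,\gcal_2)-1$ --- after which the runtime bound follows by bookkeeping. Throughout, let $(f_0,s_0),(f_1,s_1),\dots,(f_k,s_k)=(e,t)$ be the chain obtained by following the back references stored in $\Change$, so that $(f_j,s_j)\in\Change(j)$ and $(f_j,s_j)$ is a crossing edge in the reachability partition of $(f_{j+1},s_{j+1})$; step $i$ of the $\gcal_1$-sequence is the relabeling $(f_{i-1},s_{i-1})\to(f_{i-1},s_i)$, and the final step relabels $(e,t)$. (If $(e,t)$ is already a non-bridge, then $k=0$ and the first part is vacuous.)

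Validity of the first $k$ steps for $\gcal_1$ is essentially \Cref{lm:change}: inductively, step $i$ relabels the non-bridge $(f_{i-1},s_{i-1})$, and by \Cref{lm:reachability_partition_unchanged} the reachability partition of $(f_i,s_i)$ is unchanged up to $\gcal_1^{i-1}$, so by \Cref{lm:cut_edges} the step turns $(f_i,s_i)$ into a non-bridge. For the last step, $(e,t)$ is a non-bridge in $\gcal_1^k$, so deleting it keeps snapshot $t$ connected; since relabelings never change the number of temporal edges on a vertex pair, $\gcal_1^k$ and $\gcal_2^k$ have equally many copies on the pair $e$, and since $(e,t)\in\ecal_1^k\setminus\ecal_2^k$ there is a time $\tau$ with $(e,\tau)\in\ecal_2^k\setminus\ecal_1^k$; adding $(e,\tau)$ to the connected snapshot $\tau$ keeps it connected. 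So the $\gcal_1$-sequence is valid.

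The crux is transferring validity to $\gcal_2$, and this is where the choice of $(e,t)$ is used. The key observation: if a temporal edge lies in $\ecal_1$ and is a non-bridge in some $\gcal_1^j$ with $j\le k-1$, then the length-$j$ prefix of the $\gcal_1$-sequence witnesses that it is $l$-changeable for some $l\le j<k$, so by the minimality of $k$ it cannot lie in $\ecal_1\setminus\ecal_2$; in particular all of $(f_0,s_0),\dots,(f_{k-1},s_{k-1})$ lie in $\ecal_1\cap\ecal_2$. I would then prove by induction on $i\in\{0,\dots,k\}$ the invariants (a) $\ecal_1^i\setminus\ecal_2^i=\ecal_1\setminus\ecal_2$ and $\ecal_2^i\setminus\ecal_1^i=\ecal_2\setminus\ecal_1$, and (b) $\gcal_2^i$ is always-connected. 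For the step, $(f_{i-1},s_{i-1})$ is a non-bridge in $\gcal_1^{i-1}$, hence lies on a cycle $C$ in the snapshot $\gcal_1^{i-1}(s_{i-1})$; every edge of $C$ is a non-bridge of $\gcal_1^{i-1}$, so by the observation (distinguishing edges of $C$ that do and do not belong to the original $\ecal_1$) no edge of $C$ lies in $\ecal_1\setminus\ecal_2$, and therefore by (a) every edge of $C$ lies in $\ecal_2^{i-1}$. Thus $C$ is a cycle in $\gcal_2^{i-1}(s_{i-1})$ too, so $(f_{i-1},s_{i-1})$ is a non-bridge there: deleting it keeps snapshot $s_{i-1}$ connected, and adding $(f_{i-1},s_i)$ to the (by (b)) connected snapshot $s_i$ keeps it connected, re-establishing (b); since the identical relabeling is applied to both graphs, (a) is re-established as well. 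Hence the $\gcal_2$-sequence is valid.

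Invariant (a) gives $\delta(\gcal_1^k,\gcal_2^k)=\delta(\gcal_1,\gcal_2)$, and the last step relabels $(e,t)\in\ecal_1^k\setminus\ecal_2^k$ to $(e,\tau)\in\ecal_2^k\setminus\ecal_1^k$, so afterwards $(e,t)$ is no longer a difference and $(e,\tau)$ has become common to both while nothing else changed, giving $\delta(\gcal_1^{k+1},\gcal_2^k)=\delta(\gcal_1,\gcal_2)-1$. For the runtime, computing all $\Change(k)$ costs $\bigO(M^2)$ by \Cref{lm:change}, selecting $(e,t)$ and reconstructing the chain of at most $M$ distinct temporal edges costs $\bigO(M)$, and applying the at most $M+1$ relabelings to both graphs and scanning $\ecal_2^k$ for $\tau$ costs $\bigO(M^2)$, for $\bigO(M^2)$ overall. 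The main obstacle is the cycle-transfer argument of the third paragraph, together with the bookkeeping that keeps invariant (a) exactly intact --- in particular checking that each relabeling target is absent from $\gcal_2^{i-1}$ (so the operation is a genuine single relabeling there) and handling the degenerate possibility that the chain reuses a vertex pair or a time; everything else is routine given the earlier lemmas.
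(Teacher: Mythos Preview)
Your proposal is correct and follows essentially the same approach as the paper: both use the minimality of $k$ to argue that every non-bridge of $\gcal_1^j$ (for $j<k$) also belongs to $\gcal_2^j$, and then transfer the cycle through $(f_{i-1},s_{i-1})$ from $\gcal_1^{i-1}$ to $\gcal_2^{i-1}$ to show the same relabeling is valid there. Your write-up is more explicit about the invariant $\ecal_1^i\setminus\ecal_2^i=\ecal_1\setminus\ecal_2$ and about the degenerate cases (relabeling target already present, reused vertex pairs), which the paper leaves implicit under ``both sequences apply the same operations,'' but the argument is the same.
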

\begin{proof}
    Since $\Change$ is computed correctly, the reconfiguration sequence $(\gcal_1, \cdots, \gcal_1^{k})$ is valid, and $(e,t)$ is a non-bridge in $\gcal_1^k$. Under our assumption that $\gcal_1$ and $\gcal_2$ have the same number of edges between any pair of vertices, and since $(e,t)$ is not in $\gcal_2$, there must exist an edge $(e, t')$ in $\gcal_2^k$ which is not in $\gcal_1^k$. Therefore, the whole reconfiguration sequence $(\gcal_1, \cdots, \gcal_1^{k+1})$ is valid.
    
    We show by induction that the reconfiguration sequence $(\gcal_2, \cdots, \gcal_2^k)$ is valid. For the base case, the sequence consisting only of $\gcal_2$ is trivially valid. Suppose $(\gcal^0_2, \dots, \gcal_2^j)$ is valid for some $j < k$. Let $(e', t')$ be the edge changed from $\gcal_1^j$ to $\gcal_1^{j+1}$. Since $(e,t)$ was chosen with the shortest reconfiguration sequence and $j <k$, all edges in $\gcal^j_1$ that are not in $\gcal_2^j$ are bridges. By contraposition, all non-bridges in $\gcal_1^j$ are also in $\gcal_2^j$. Every non-bridge in $\gcal_1^j$ must be part of a cycle consisting of non-bridges. Therefore a cycle including $(e', t')$ in $\gcal_1^j$ must also be in $\gcal_2^{j}$, which shows that $(e', t')$ is a non-bridge in $\gcal^j_2$. Therefore the reconfiguration sequence $(\gcal^0_2, \cdots, \gcal_2^{j+1})$ is valid. By induction, the whole reconfiguration sequence $(\gcal_2, \cdots \gcal_2^k)$ is valid.

    Since both sequences apply the same operations up to step $k$, it holds that $\delta(\gcal_1, \gcal_2) = \delta(\gcal_1^k, \gcal_2^k)$. As $(e,t)$ is not in $\gcal_2^k$ and is relabeled in $\gcal^k_1$ to an edge contained in $\gcal_2^k$, it follows that $\delta(\gcal_1^{k+1}, \gcal_2^k) < \delta(\gcal_1, \gcal_2)$.

    Finally, since computing $\Change$ takes $\bigO(M^2)$ time by \Cref{lm:change} and the sequences have length at most $M$, the total running time is in $\bigO(M^2)$.
\end{proof}

We can now prove our main result \cref{thm:main_result}, restated here for convenience.

\searchProblemPolynomial*

\begin{proof}
    If there exists an unchangeable edge in $\gcal_1$ that is not in $\gcal_2$, this edge remains a bridge in all graphs reachable from $\gcal_1$ through a valid reconfiguration sequence. In that case, reconfiguration is not possible.
    
    Otherwise, all edges in $\gcal_1$ that are not in $\gcal_2$ are changeable. 
    Using \Cref{lm:difference} we obtain $\gcal_1'$ and $\gcal_2'$ with a smaller difference than $\gcal_1$ and $\gcal_2$. By repeatedly applying \Cref{lm:difference}, the difference will eventually reach $0$. At that point both graphs have been reconfigured to a canonical $\gcal_c$. This gives us a valid reconfiguration sequences from $\gcal_1$ to $\gcal_c$, and from $\gcal_2$ to $\gcal_c$. Since every edge relabeling operation can be reversed, while preserving the always-connected property, we reverse the sequence $(\gcal_2, \cdots, \gcal_c)$ to obtain a valid reconfiguration sequence $(\gcal_c, \cdots, \gcal_2)$. Together with the sequence $(\gcal_1, \cdots, \gcal_c)$, we get a valid reconfiguration sequence from $\gcal_1$ to $\gcal_2$.

    Since there are at most $M$ edges in $\gcal_1$ that are not in $\gcal_2$, there are at most $M$ applications of \Cref{lm:difference}. As each application results in reconfiguration sequences of length at most $M$, both sequences $(\gcal_1, \cdots, \gcal_c)$ and $(\gcal_c, \cdots, \gcal_2)$ have at most length $M^2$. Therefore the total reconfiguration sequence has length at most $2M^2$. 
    Since each application takes $\bigO(M^2)$ time, the total running time is in $\bigO(M^3)$ and therefore in polynomial time. 
\end{proof}

\section{\APX-hardness of Shortest Reconfiguration}\label{sec:hardness}

The natural next question is whether we can find a shortest reconfiguration sequence between two given graphs in polynomial time. We show that such an algorithm does not exist if $\PTIME\neq \NP$, even for graphs that have lifetime $T=2$. To do so, we show \NP-hardness of the following decision problem: Given two always-connected temporal graphs $\gcal_1, \gcal_2$ and $\ell\in \Na$, determine if there is a valid reconfiguration sequence from $\gcal_1$ to $\gcal_2$ of length at most $\ell$. We call this problem \LCSRlong (\LCSR). In addition, we show that the corresponding minimization problem is \APX-hard. In particular, if $\PTIME \neq \NP$, the problem does not admit a polynomial-time approximation scheme (PTAS).  

\begin{restatable}{theorem}{LCSR_NP-hard}
    \label{thm:NP-hardness}
    \LCSRlong is \NP-hard.
\end{restatable}

\begin{proof}

We present a polynomial-time reduction from \VC which, given a graph $G=(V, E)$ and a $k\in \Na$, asks whether there is a set of vertices $C\subseteq V$ of size at most $k$ such that every edge has an endpoint in $C$.

\paragraph*{Construction.}

\begin{figure}[h!]
    \centering
    \includegraphics[width=\linewidth]{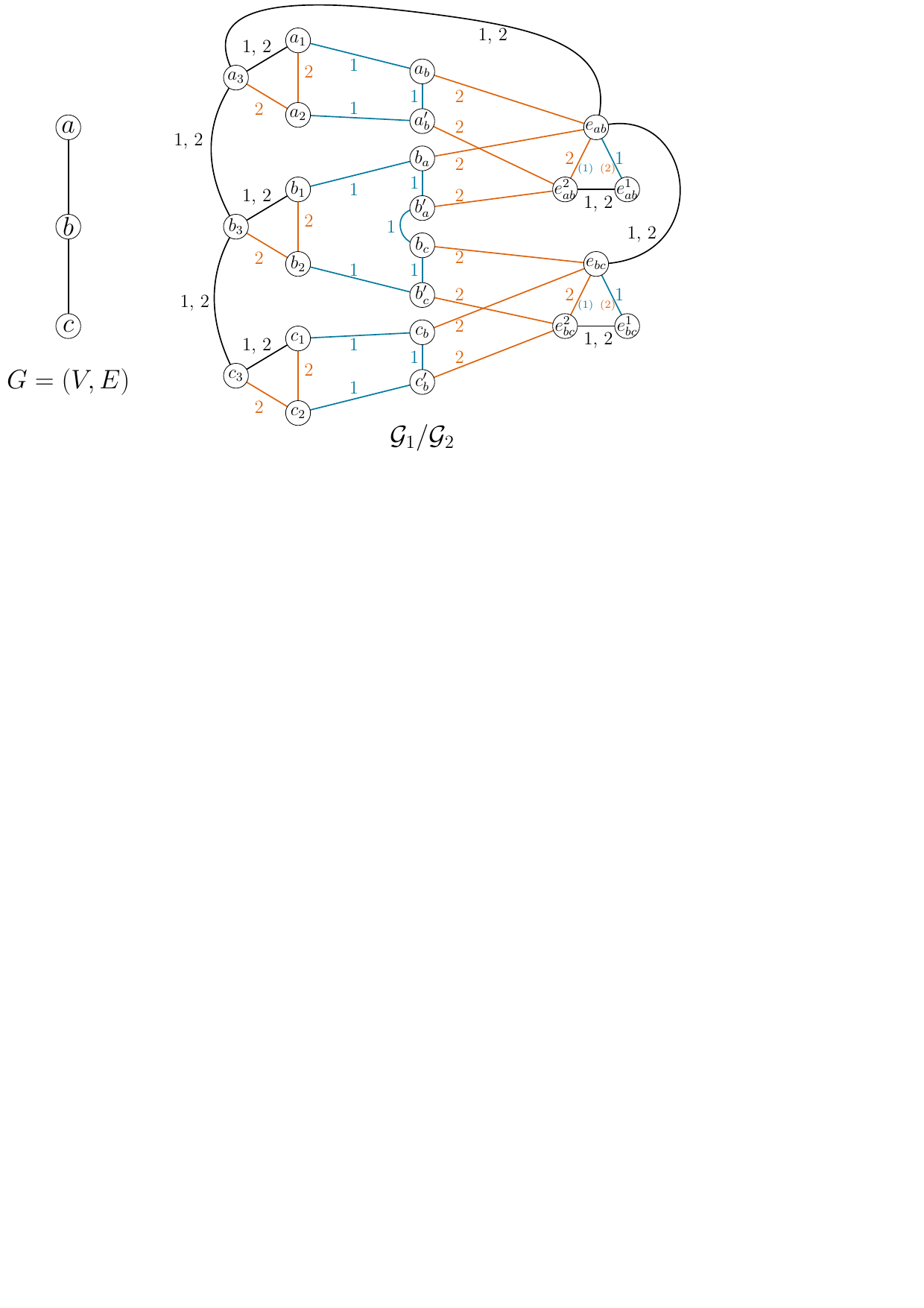}
    \caption{An example construction for a given graph $G$ with 3 vertices $a, b, c$. All labels are given for $\gcal_1$. Labels in parentheses indicate differences in $\gcal_2$, and otherwise the labels are identical.}
    \label{fig:hardness_construction}
\end{figure}

Given an instance $(G=(V,E), k)$ of \VC, we construct a temporal graph $\gcal_1$ with lifetime $T=2$. Whenever we define an edge with label $*$, we imply that the edge exists once in each snapshot.
\begin{enumerate}
    \item For each $v\in V$, create vertices $v_1, v_2, v_3$ with edges $(\{v_1, v_2\}, 2), (\{v_2, v_3\}, 2), (\{v_3, v_1\}, * )$. We refer to the 2-labeled edges incident to $v_2$ as \emph{activation-edges of $v$}.
    
    \item For each edge $\{u, v\}\in E$, create an \emph{edge-gadget} with vertices $e_{uv}, e_{uv}^1, e_{uv}^2$ and edges $(\{e_{uv}, e_{uv}^1\}, 1), (\{e_{uv}, e_{uv}^2\}, 2), (\{e_{uv}^1, e_{uv}^2\}, *)$.

    \item For each $v\in V$, for each edge $\{u, v\}\in E$ incident to $v$, create vertices $v_u$ and $v_u'$. Construct a path between $v_1$ and $v_2$ through all these vertices such that any $v_u'$ directly follows $v_u$. Assign label $1$ to its edges. We refer to these edges as \emph{1-transition-edges of $v$}.

    \item For each edge $\{u, v\}\in E$, add edges $(\{e_{uv}, u_v\}, 2)$ and $(\{e_{uv}, v_u\}, 2)$, as well as $(\{e_{ab}^2, u_v'\}, 2)$ and $(\{e_{ab}^2, v_u'\}, 2)$. We refer to these edges as \emph{2-transition-edges of $v$}.

    \item Connect all $v_3$ for $v\in V$ and all $e_{uv}$ for $\{u, v\}\in E$ in a path with label $*$.
\end{enumerate}

\noindent Refer to \Cref{fig:hardness_construction} for an example. $\gcal_2$ is identical to $\gcal_1$, except that the temporal edges in each edge-gadget are flipped -- for brevity, we will refer to changing the label of an edge to the respective other label as \emph{flipping} it.

$\gcal_1$ and $\gcal_2$ contain $\bigO(|V|+|E|)$ vertices edges and can be constructed in polynomial time. Let $\ell=2k+4|E|$. We show that $G$ admits a vertex cover of size $k$ if and only if there is a valid reconfiguration sequence between $\gcal_1$ and $\gcal_2$ of length at most $\ell$. Below we outline the construction for both directions -- refer to the appendix for formal constructions and proofs.

\medskip

\textbf{\VC $\Longrightarrow$ \LCSR.}
Let $C\subseteq V$ be a vertex cover for $G=(V,E)$ of size at most $k$. We reconfigure $\gcal_1$ into $\gcal_2$ as seen in \Cref{fig:hardness_reconf_example}.

\begin{figure}[h!]
    \centering
    \includegraphics[width=\linewidth]{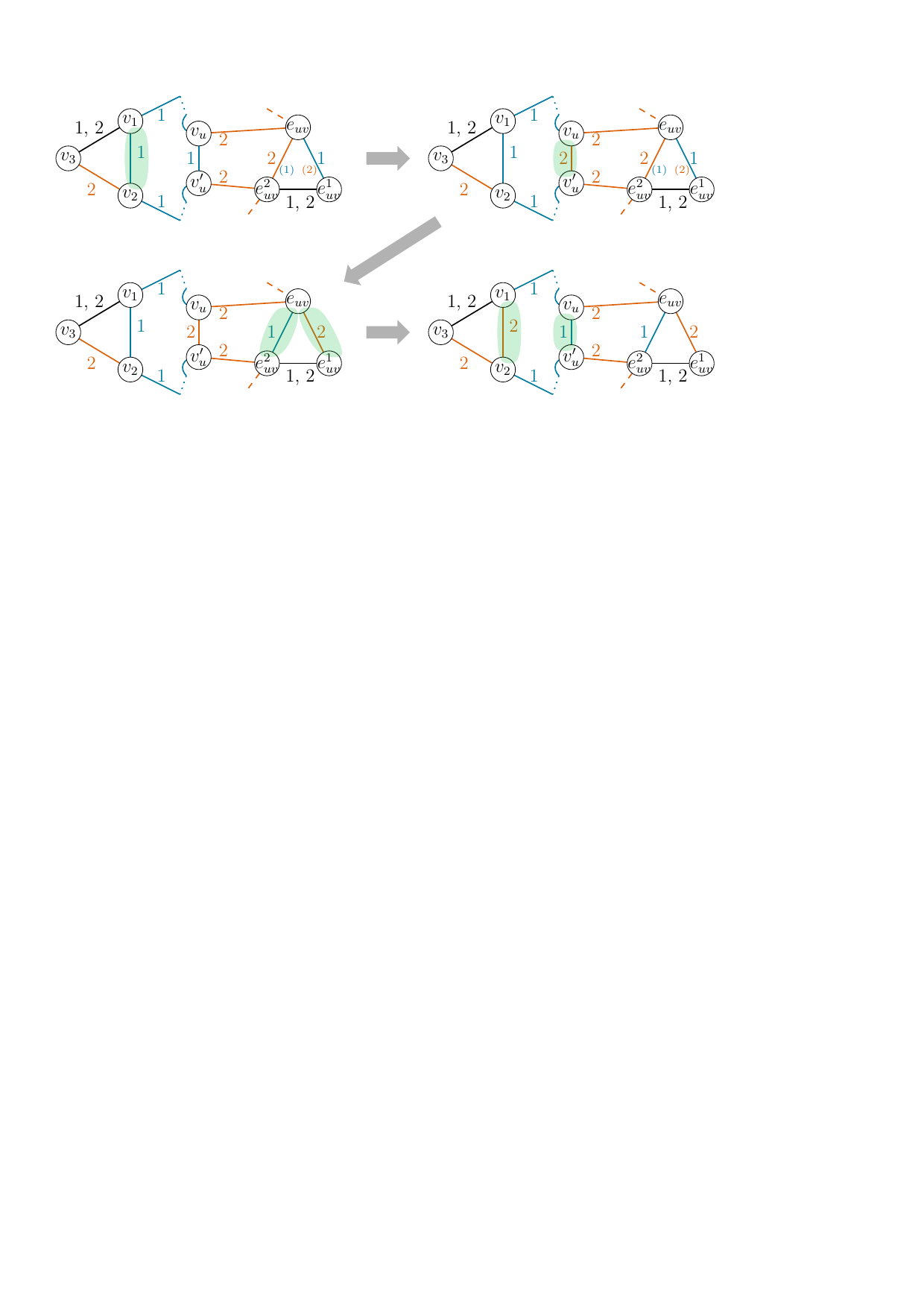}
    \caption{For each $v\in C$, first flip one of its activation-edges. Then flip the incorrect edges in each edge-gadget that corresponds to an edge incident to $v$. Undo flips of edges that have identical labels in $\gcal_1$ and $\gcal_2$.}
    \label{fig:hardness_reconf_example}
\end{figure}

\textbf{\LCSR $\Longrightarrow$ \VC.}
Let $(\gcal_1=\gcal^0, \gcal^1, \dots, \gcal^{\ell'}=\gcal_2)$ be a valid reconfiguration sequence from $\gcal_1$ to $\gcal_2$ of length $\ell'\leq \ell$. Let $F$ be the set of static edges that are changed at least once during this reconfiguration. Based on this, let $C$ be the set of vertices $v\in V$ for which an activation edge of $v$ is in $F$. $C$ is a vertex cover in $G$ of size at most $k$.
\end{proof}

We can further extend this \NP-hardness to \APX-hardness due to the fact that \VCC, which is the \VC problem constrained to cubic graphs, is \APX-hard \cite{DBLP:conf/ciac/AlimontiK97}. We can extend this \APX-hardness to \LCSR via an L-reduction.

\begin{definition}[\cite{DBLP:journals/jcss/PapadimitriouY91}]
    Given two minimization problems $F$ and $G$, an \emph{$L$-reduction} is a polynomial transformation $f$ from instances of $F$ to instances of $G$, if there are constants $\alpha, \beta$ such that for every instance $x$ of $F$
    \begin{enumerate}[1)]
        \item $\opt_G(f(x)) \le \alpha \cdot \opt_F(x)$
        \item for every feasible solution $y$ of $f(x)$ with objective value $m_G(f(x), y) =: c_2$ we can in polynomial time find a solution $y'$ of $x$ with $m_F(x, y') =: c_1$ such that $c_1 - \opt_F(x) \le \beta (c_2 - \opt_G(f(x))$ 
    \end{enumerate}
\end{definition}

\begin{corollary}
    \LCSRlong is \APX-hard.
\end{corollary}

\begin{proof}
    We show that our reduction from \cref{thm:NP-hardness} fulfills the requirements of an L-reduction from \VCC to \LCSR.

    The reduction in \cref{thm:NP-hardness} already describes a polynomial transformation $f$ from \VC (and thus \VCC) to \LCSR. As established in the proof, a vertex cover of size $k$ corresponds to a reconfiguration sequence of length $\ell = 2k + 4|E|$. Given a \VCC instance $x$, every node covers at most $3$ edges as the node degree is at most $3$. This implies for the size $k$ of the minimal vertex cover that $k \ge \frac{|E|}{3}$ or $3k \ge |E|$. Hence, the linear bound holds that $\opt_{\LCSR}(f(x)) = 2k + 4 |E| \le 2k + 12k = 14 \cdot \opt_{VC3}(x)$.

    Given an \LCSR instance $f(x)$ with a valid reconfiguration sequence of length $\ell$, we can find a vertex cover of size $k$ in $x$ in polynomial time by choosing all vertices for which the activation-edges were changed in the reconfiguration sequence. Let $k'$ be the size of a minimal vertex cover in $x$, and let $\ell'$ be the length of a shortest valid reconfiguration sequence in $f(x)$. As established, we have $\ell' = 2k' + 4|E|$. As the shortest valid reconfiguration sequence that changes $k$ activation-edges has length $2k+4|E|$, we have $\ell \ge 2k+4|E|$. Hence, we get $\ell' - k' +k = k' + 4|E| + k \le 2k + 4|E| \le \ell$ and thus $k - k' \le \ell - \ell'$. This shows that all conditions of an L-reduction hold, and therefore \APX-hardness of \LCSR follows.
\end{proof}

\section{Conclusion}

In this paper, we started the research into temporal graph reconfiguration problems. As a first step, we introduced the $\LCRlong$ problem, which asks whether two temporal graphs can be transformed into each other through a sequence of edge relabeling operations while maintaining the always-connected property at every step. We showed how this problem is equivalent to the \STSRlong problem and answered the open question of \cite{DBLP:conf/isaac/HanakaIKOS24} by providing a simpler and graph-theoretic algorithm for it. We also showed a more general hardness result than \cite{DBLP:conf/isaac/HanakaIKOS24}, by showing that \LCRlong is \APX-hard.

Beyond these results, several variants are worth considering. It would be interesting to consider properties of temporal graphs that have already been studied, such as source reachability and spanners, within the reconfiguration framework.

\bibliography{references.bib}

\newpage
\appendix

\section{Equivalence of LCR and STSR}

Hanaka et al. define the problem of \STSRlong \cite{DBLP:conf/isaac/HanakaIKOS24} as follows: Given a multigraph $G$, a sequence of $k$ spanning trees $(T_1, \dots T_k)$ is \emph{feasible} if they are edge-disjoint. Given two such feasible sequences, the goal is to reconfigure one sequence into another: In each step, a single spanning tree $T_i$ is replaced with $T_i'=T_i-e+f$ for $e\in E(T_i)$ and $f\in E(G)\setminus E(T_i)$.
This differs from \LCR in that each $T_i$ is a spanning tree as opposed to a connected graph, but there are also edges not contained in any spanning tree, and the reconfiguration step replaced an edge with such an unused edge.

We show that every instance of \LCR can be mapped to an equivalent instance of \STSR in linear time and vice versa. We denote the set of (static) edges of the $i$-th snapshot of a temporal graph as $E(\mathcal{G}(i)) := \{e \mid (e, i) \in \mathcal{E}(i)\}$. A spanning tree sequence $\mathbb{T} = (T_1, \cdots, T_k)$ over a multigraph $G$ is \textit{equivalent} to a temporal graph $\mathcal{G}$ of lifetime $k$ with $V(G) = V(\mathcal{G})$ and $E(G) = \biguplus_{i=1}^k E(\mathcal{G}(i))$ iff $E(T_i) \subseteq E(\mathcal{G}(i))$. Note that for every always-connected temporal graph we can construct a similar feasible spanning tree sequence as every snapshot contains a spanning tree. Similarly, given a feasible spanning tree sequence we can construct a similar always-connected temporal graph as the spanning trees are edge disjoint. We show the following lemmas for the equivalence of \LCR and \STSR.

\begin{lemma}
    Given a temporal graph $\mathcal{G}$ and two spanning tree sequences $\mathbb{T} = (T_1, \cdots, T_k)$ and $\mathbb{T}' = (T_1', \cdots, T_k')$ that are equivalent to $\mathcal{G}$, we can reconfigure $\mathbb{T}$ into $\mathbb{T}'$ in linear time.
\end{lemma}
\begin{proof}
    For every $1 \le i \le k$, $T_i$ and $T_i'$ are spanning trees of the graph induced by $E(\mathcal{G}(i))$. As every edge in $E(\mathcal{G}(i)) \setminus E(T_i)$ is not part of any spanning tree $T_j$ for $1 \le j \le k$, the reconfiguration of $\mathbb{T}$ into $\mathbb{T}'$ reduces to the spanning tree reconfiguration problem for every spanning tree in the sequence. This can be solved in linear time as observed in \cite{DBLP:journals/tcs/ItoDHPSUU11}.
\end{proof}
\begin{lemma}
    Given a spanning tree sequence $\mathbb{T}$ and two temporal graphs $\mathcal{G}$, $\mathcal{G}'$ that are equivalent to $\mathbb{T}$, we can reconfigure $\mathcal{G}$ into $\mathcal{G}'$ in linear time.
\end{lemma}
\begin{proof}
    For every $1 \le i \le k$ the snapshots $\mathcal{G}(i)$ and $\mathcal{G}'(i)$ both contain $E(T_i)$ as they are equivalent. Thus any differing edges can be changed in a single operation as connectivity is ensured by the spanning trees.
\end{proof}

\begin{lemma}
    Given a temporal graph $\mathcal{G}$ and an equivalent spanning tree sequence $\mathbb{T}$ and an operation on one of them, we can do one operation on the other object to maintain equivalence.
\end{lemma}
\begin{proof}
    Let $T_i' = T_i - e + f$ be some operation on $\mathbb{T}$ for some $1 \le i \le k$ with $e \in E(T_i)$ and $f \in E(G) \setminus E(T_i)$. Since $f$ is not part of $E(T_j)$ for $1 \le j \le k$, we can relabel the corresponding temporal edge in $\mathcal{G}$ to $(f, i)$ while ensuring connectivity in $\mathcal{G}(j)$. This maintains equivalence because $T_i' \subseteq E(\mathcal{G}(i))$ holds after relabeling.

    Let $(e,i) \to (e, j)$ be a valid relabeling operation in $\mathcal{G}$. If $e \not \in E(T_i)$ holds they are still equivalent. Thus we assume that $e \in E(T_i)$ is true. As $\mathcal{G}(i) - e$ is connected, there exists an edge $f \in E(\mathcal{G}(i))$ such that $T_i - e + f$ is a spanning tree of $G$. Since $f$ is in $E(\mathcal{G}(i))$, but not in $E(T_i)$ we know that $f$ is not part of any spanning tree of $\mathbb{T}$. Therefore, the operation $T_i' = T_i - e + f$ is valid and maintains equivalence.
\end{proof}

\newpage

\section{Proof of Equivalence for the NP-hardness reduction}

Recall the construction defined for \Cref{thm:NP-hardness}. For convenience it is displayed again in \Cref{fig:hardness_construction_apx}.

\begin{figure}[h!]
    \centering
    \includegraphics[width=\linewidth]{images/4.1.pdf}
    \caption{An example construction for a given graph $G$ with 3 vertices $a, b, c$. All labels are given for $\gcal_1$. Labels in parentheses indicate differences in $\gcal_2$, and otherwise the labels are identical.}
    \label{fig:hardness_construction_apx}
\end{figure}

We show that $G$ admits a vertex cover of size $k$ if and only if there is a valid reconfiguration sequence between $\gcal_1$ and $\gcal_2$ of length at most $\ell=2k+4|E|$.

\paragraph*{\VC $\Longrightarrow$ \LCSR}
Let $C\subseteq V$ be a vertex cover for $G=(V,E)$ of size at most $k$. We reconfigure $\gcal_1$ into $\gcal_2$ as follows (see \Cref{fig:hardness_reconf_example}):
\begin{itemize}
    \item For each $v\in C$:
    \begin{itemize}
        \item Flip the activation-edge $(\{v_1, v_2\}, 2)$. This closes a $1$-labeled cycle with the 1-transition-edges of $v$ meaning all of them become non-bridges.
        \item For each incident edge $\{u,v\}$ for which the edge-gadget does not yet have the labels assigned in $\gcal_2$:
        \begin{itemize}
            \item Flip $(\{v_u, v_u'\}, 1)$. This closes a $2$-labeled cycle containing $(\{e_{uv}, e_{uv}^2\}, 2)$.
            \item Flip $(\{e_{uv}, e_{uv}^2\}, 2)$, and then flip $(\{e_{uv}, e_{uv}^1\}, 1)$. This edge-gadget now has the same labels as in $\gcal_2$.
            \item Flip $(\{v_u, v_u'\}, 2)$ back.
        \end{itemize}
        \item Flip $(\{v_1, v_2\}, 1)$ back.
    \end{itemize}
\end{itemize}

\begin{figure}[h!]
    \centering
    \includegraphics[width=\linewidth]{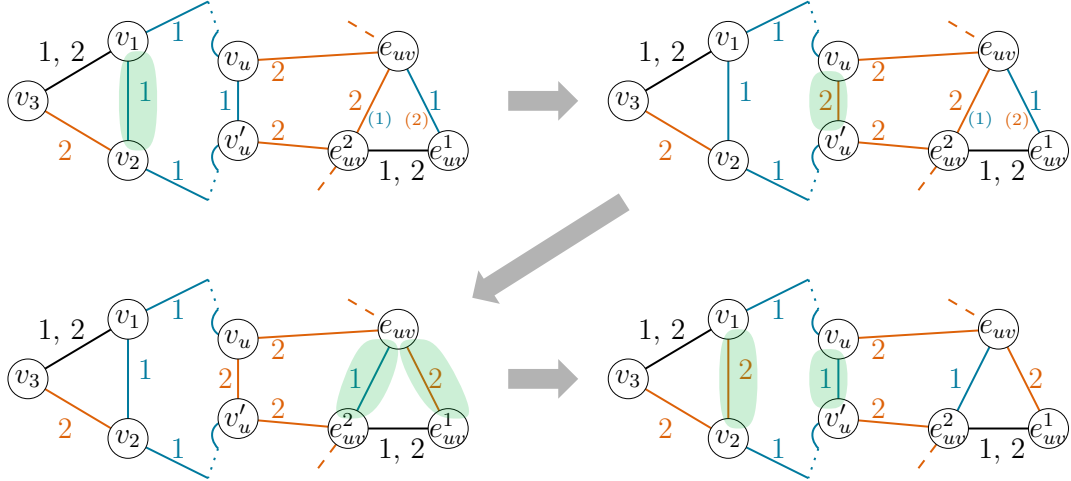}
    \caption{The reconfiguration steps to reconfigure the differing edges in the edge-gadget of $\{u, v\}$ using an activation-edge of $v$.}
    \label{fig:hardness_reconf_example_apx}
\end{figure}

Because $C$ is a vertex cover, all edge-gadgets will eventually be correctly reconfigured. Because these hold the only differences between $\gcal_1$ and $\gcal_2$, all other changes are reverted, and we only ever flip non-bridges, this is a valid reconfiguration sequence from $\gcal_1$ to $\gcal_2$. It contains $4$ steps for each edge in $E$ and $2$ steps for each element of $C$, thus it has at most length $2k+4|E|=\ell$.

\paragraph*{\LCSR $\Longrightarrow$ \VC}
Let $(\gcal_1=\gcal^0, \gcal^1, \dots, \gcal^{\ell'}=\gcal_2)$ be a valid reconfiguration sequence from $\gcal_1$ to $\gcal_2$ of length $\ell'\leq \ell$. Let $F$ be the set of static edges that are changed at least once during this reconfiguration. Based on this, let $C$ be the set of vertices $v\in V$ for which an activation edge of $v$ is in $F$. We show that $C$ is a vertex cover in $G$ of size at most $k$.

\medskip
\noindent\textbf{$C$ has size at most $k$.}
As $\gcal_1$ and $\gcal_2$ differ in $2|E|$ many edges (those in the edge-gadgets), at least $2|E|$ steps of the reconfiguration are used to directly resolve those differences. In the remaining $\leq2k+2|E|$ steps, at most $k+|E|$ edges can be flipped and later un-flipped, thus $F$ contains at most $k+|E|$ edges outside of edge-gadgets. We will argue that at least $|E|$ of them are not activation-edges, and therefore at most $k$ edges can be activation-edges.

For every $\{u,v\}\in E$, the edges $(\{e_{uv}, e_{uv}^1\}, 1)$ and $(\{e_{uv}, e_{uv}^2\}, 2)$ of the edge-gadget are initially bridges. By \Cref{lm:cut_edges}, to turn either of them into non-bridges, a crossing edge in the reachability partition of that edge must be flipped beforehand.
Note that $(\{e_{uv}, e_{uv}^1\}, 1)$ and $(\{e_{uv}, e_{uv}^2\}, 2)$ are each crossing the others reachability partition, so once one of them is flipped, the other can immediately be flipped as well. Still, some crossing edge outside of the edge-gadget must be flipped at least for one of them.
For $(\{e_{uv}, e_{uv}^1\}, 1)$, this includes $(\{u_v', e_{uv}^2\}, 2)$ and $(\{v_u', e_{uv}^2\}, 2)$. For $(\{e_{uv}, e_{uv}^2\}, 2)$, this includes all edges incident to $u_v'$ that have label $1$, and all edges incident to $v_u'$ that have label $1$. Collect these prerequisite edges for $\{u,v\}$ in the set $P_{uv}$ (see \Cref{fig:hardness_prerequisite_edges_apx}).

By \Cref{lm:reachability_partition_unchanged}, $P_{uv}$ does not change through other reconfiguration steps that leave $(\{e_{uv}, e_{uv}^1\}, 1)$ and $(\{e_{uv}, e_{uv}^2\}, 2)$ as bridges.
Because at least one edge of $P_{uv}$ must be flipped, and $P_{uv}$ is disjoint from any other $P_{u'v'}$ with $\{u,v\}\neq \{u',v'\}\in E$, at least $|E|$ edges (one from each $P$-set) must be contained in $F$. Thus at most $k$ edges in $F$ could be activation-edges, and therefore $C$ contains at most $k$ elements by definition.

\begin{figure}[h!]
    \centering
    \includegraphics[width=0.4\linewidth]{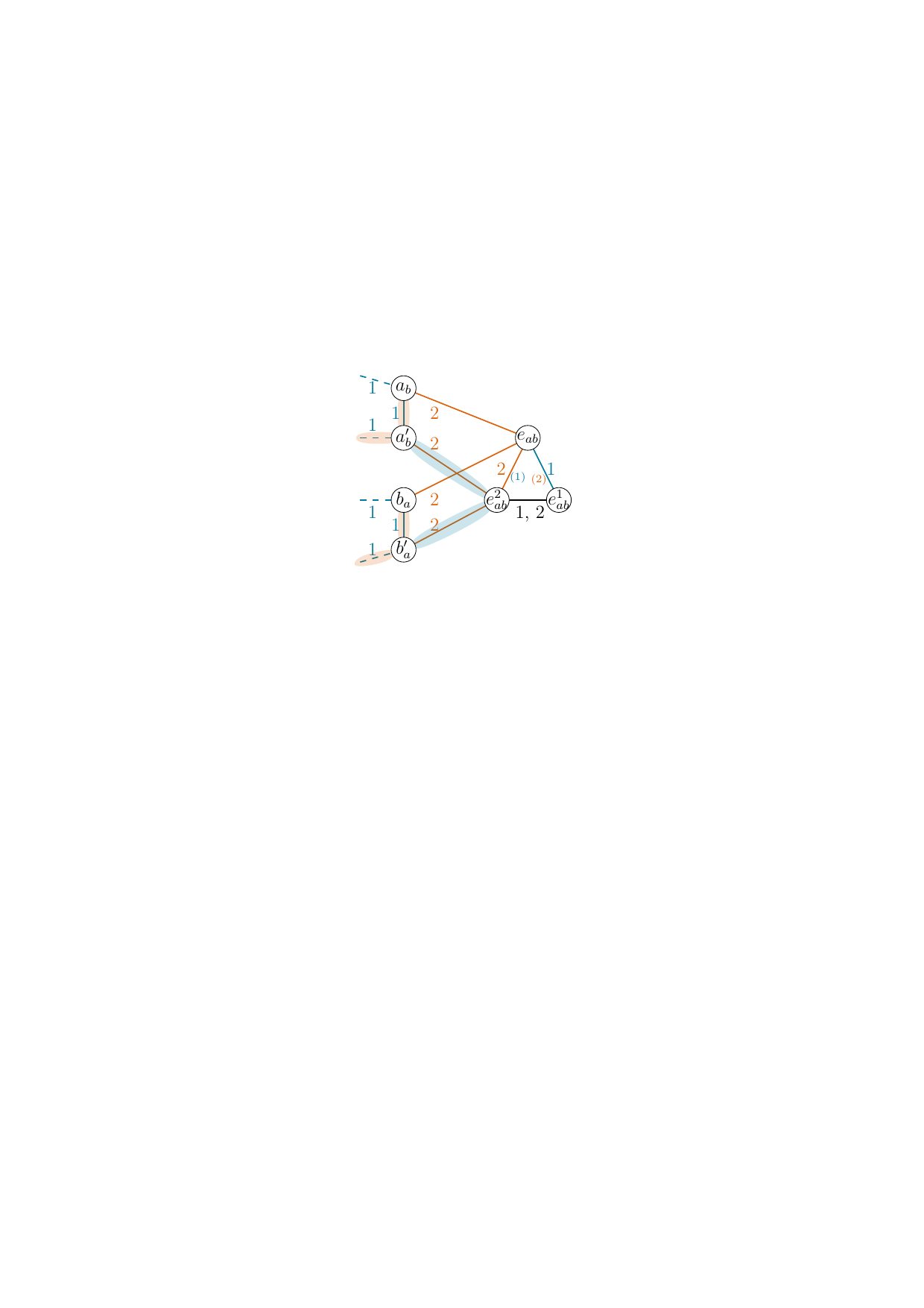}
    \caption{The set of prerequisite edges $P_{uv}$ for the edge-gadget of $\{u,v\}$. Blue-highlighted edges are crossing edges in the reachability partition of $(\{e_{uv}, e_{uv}^1\}, 1)$, and orange-highlighted edges are the same for $(\{e_{uv}, e_{uv}^2\}, 2)$.}
    \label{fig:hardness_prerequisite_edges_apx}
\end{figure}

\noindent\textbf{$C$ is a vertex cover.}
We first show that for any $v\in V$, if no activation edge of $v$ is in $F$, no 1-transition-edges of $v$ and no 2-transition-edges of $v$ can be in $F$: All these edges are initially bridges, so by \Cref{lm:cut_edges} a crossing edge in their reachability partition must be flipped beforehand. For edges of the 1-transition path of $v$, this includes some 2-transition-edges of $v$ and the activation-edges of $v$ (which we are ruling out). For the mentioned 2-transition-edges, this includes 1-transition-edges of $v$. We observe a cyclic dependency: Without changing an activation edge of $v$, the 1-transition-path and the incident 2-transition-edges depend on one another to be changed first. Thus none of them can actually be changed, and none of them can be contained in $F$ assuming a valid reconfiguration sequence.

Now assume towards contradiction that $C$ is not a vertex cover, i.e. there is an edge $\{u, v\}$ that is not covered by $C$. Then no activation-edge of $u$ and no activation-edge of $v$ is in $F$, and thus no 1- or 2-transition-edges of $u$ or $v$ can be in $F$. As explained before, at least one of the prerequisite edges $P_{uv}$ must be flipped to reconfigure the edge-gadget of $\{u,v\}$. But because $P_{uv}$ only contains 1- and 2-transition-edges of $u$ and $v$ (see \Cref{fig:hardness_prerequisite_edges_apx}), none of them can be in $F$, so the edge-gadget of $\{u,v\}$ cannot be reconfigured. This contradicts the given reconfiguration sequence being valid and arriving at $\gcal_2$. Thus the assumption is wrong and $C$ must be a vertex cover.

\end{document}